



\documentclass[sigconf,nonacm]{aamas} 


\usepackage{balance} 

\usepackage[bookmarksnumbered,unicode]{hyperref}
\usepackage{amsthm}
\usepackage{todonotes}
\usepackage{caption}
\usepackage{subcaption}
\usepackage{tikz}
\usepackage{svg}
\usepackage{multirow}
\usepackage{aligned-overset}

\usepackage[ruled]{algorithm2e}






\acmSubmissionID{79}



\newboolean{includeHidden}
\setboolean{includeHidden}{false}
\newlength{\commentWidth}
\setlength{\commentWidth}{7cm}

\newtheorem{definition}{Definition}

\newtheorem{proposition}{Proposition}

\DeclareMathOperator{\E}{\mathbb{E}}
\DeclareMathOperator{\R}{\mathbb{R}}
\DeclareMathOperator{\N}{\mathbb{N}}

\newcommand{\argdot}{\,\cdot\,}

\newcounter{evalMetricCounter}


\newcommand{\expect}[2][]{\E_{#1}\left[#2\right]}


\title{Equilibrium Learning in Combinatorial Auctions:\\Computing Approximate Bayesian Nash Equilibria via Pseudogradient Dynamics}


  \author{%
  Stefan Heidekrüger}
 \affiliation{\institution{Technical University of Munich}}
  \email{stefan.heidekrueger@in.tum.de}

  \author{%
  Paul Sutterer}
 \affiliation{\institution{Technical University of Munich}}
  \email{paul.sutterer@in.tum.de}

  \author{%
  Nils Kohring}
 \affiliation{\institution{Technical University of Munich}}
  \email{kohring@in.tum.de}

  \author{%
  Maximilian Fichtl}
 \affiliation{\institution{Technical University of Munich}}
  \email{max.fichtl@in.tum.de}

  \author{%
  Martin Bichler}
 \affiliation{\institution{Technical University of Munich}}
  \email{bichler@in.tum.de}


\begin{abstract}
  Applications of combinatorial auctions (CA) as market mechanisms are prevalent in practice, yet their Bayesian Nash equilibria (BNE) remain poorly understood. Analytical solutions are known only for a few cases where the problem can be reformulated as a tractable partial differential equation (PDE). In the general case, finding BNE is known to be computationally hard. Previous work on numerical computation of BNE in auctions has relied either on solving such PDEs explicitly, calculating pointwise best-responses in strategy space, or iteratively solving restricted subgames. In this study, we present a generic yet scalable alternative multi-agent equilibrium learning method that represents strategies as neural networks and applies policy iteration based on gradient dynamics in self-play. Most auctions are ex-post non\-differentiable, so gradients may be unavailable or misleading, and we rely on suitable pseudogradient estimates instead. Although it is well-known that gradient dynamics cannot guarantee convergence to NE in general, we observe fast and robust convergence to approximate BNE in a wide variety of auctions and present a sufficient condition for convergence.
\end{abstract}


\keywords{Bayesian Games, Equilibrium Learning, Combinatorial Auctions}


         
\newcommand{\BibTeX}{\rm B\kern-.05em{\sc i\kern-.025em b}\kern-.08em\TeX}


\begin{document}


\pagestyle{fancy}
\fancyhead{}


\maketitle 

\section{Introduction}

Auctions are widely used in advertising, procurement, or for spectrum sales \citep{bichler2017HandbookSpectrumAuction,milgrom2017DiscoveringPricesAuction,ashlagi2011SimultaneousAdAuctions}. 
Auction markets inherently involve incomplete information about competitors and strategic behavior of market participants. Understanding decision making in such markets has long been an important line of research in game theory. Auctions are typically modeled as Bayesian games and one is particularly interested in the equilibria of such games.

It is well-known that equilibrium computation is hard: Finding Nash equilibria is known to be PPAD-complete even for normal-form games, which assume complete information and finite action spaces, and where a Nash equilibrium is guaranteed to exist \citep{daskalakis2009ComplexityComputingNash}. In auction games modeled as Bayesian games with continuous type and action spaces, agents' values are drawn from some continuous prior value distribution and their strategies are described as continuous bid functions on these valuations. For markets of a single item, the landmark results by \citet{vickrey1961CounterspeculationAuctionsCompetitive} have enabled a deep understanding of common auction formats. 
For multi-item auctions and more specifically \emph{combinatorial auctions}, in which players bid on \emph{bundles} of multiple items simultaneously, there has been little progress. While the complexity of computing Bayes-Nash equilibria (BNE) is not well understood, \citet{cai2014SimultaneousBayesianAuctions} show that BNE computation for a specific combinatorial auction is already (at least) PP-hard. Furthermore, finding an $\epsilon$-approximation to a BNE is still NP-hard. Explicit solutions exist for very few specific environments, but in general, we neither know whether a BNE exists nor do we have a solution theory. 
Combinatorial auctions have become a pivotal research problem in algorithmic game theory \citep{roughgarden2016TwentyLecturesAlgorithmic} and they are widely used in the field \citep{bichler2017HandbookSpectrumAuction,cramton2004CombinatorialAuctions}. Thus, understanding their equilibria is paramount, and access to scalable numerical methods for computing or approximating BNE can have a significant impact.

Equilibrium learning in games differs from most learning tasks in that it suffers from the \emph{nonstationarity problem}: Each player's objective depends on other agents' actions. Prior literature on equilibrium learning primarily focuses on complete-information games. In contrast, we focus on Bayes-Nash equilibria in games with continuous action space and continuous prior type distributions. The literature on equilibrium computation for these games is in its infancy and largely relies on best-response computations. 

In this paper, we propose Neural Pseudogradient Ascent (NPGA) as a equilibrium learning method that follows gradient dynamics. While learning based on gradient dynamics has been used in complete-information games, this is not the case for Bayesian auction games: First, the underlying problem is equivalent to an infinite-dimensional variational inequality, for which we do not know an exact solution method. Second, the ex-post payoff function of auction games is non-differentiable. Finally, multi-agent gradient dynamics are known to converge to Nash equilibria only in restricted classes of games, even under complete information.

NPGA relies on self-play with neural networks, uses evolutionary strategies to compute gradients, and can exploit GPU hardware acceleration to massively parallelize the computations. In contrast to some previous work on numerical BNE computation, NPGA does not require any setting-specific subprocedures or information beyond evaluating auction outcomes themselves, and it can thus be applied to arbitrary Bayesian games. We discuss a sufficient condition for convergence of NPGA to a unique Bayes-Nash equilibrium and provide extensive experimental results on single-item and combinatorial auctions, which pose a benchmark problem in algorithmic game theory. Interestingly, we observe convergence of NPGA to approximate BNE in a wide range of small- and medium-sized combinatorial auction environments and recover the analytical Bayes-Nash equilibrium whenever it is known.

The remainder of this paper is structured as follows: In Section \ref{sec:problem-statement}, we formally introduce the model and the problem. Section \ref{sec:related} examines related work, both in the problem domain of combinatorial auctions and related to our methodology. Next, we introduce and discuss NPGA in Section \ref{sec:our-method}, before applying it to a suite of previously studied combinatorial auctions in Section \ref{sec:results}. Finally, we summarize our findings and outline future research directions.

\section{Problem statement}\label{sec:problem-statement}

\paragraph{Bayesian Games and Combinatorial Auctions.}

A \emph{Bayesian game} or \emph{incomplete information game} is a quintuple $G = (\mathcal I,\mathcal A, \mathcal V,F,u)$. $\mathcal I = \{1, \dots, n\}$ describes the set of agents participating in the game. 
$\mathcal A \equiv \mathcal A_1 \times {\cdots} \times \mathcal A_{n}$ is the set of possible action profiles, with $\mathcal A_i$ being the set of actions available to agent $i {\in} \mathcal I$. 
$\mathcal V \equiv {\mathcal V_1 \times {\cdots} \times \mathcal V_{n}}$ is the set of \emph{type profiles}.
$F{:}\ \mathcal V \rightarrow [0,1]$ defines a joint prior probability distribution over type profiles that is assumed to be common knowledge among all agents. For any dependent random variable $X$, we denote its cumulative distribution function by  $F_X$ and its probability density function by $f_X$. For example, $F_{v_i}$ denotes the marginal distribution of $i$'s type. 
At the beginning of the game, nature draws a type profile $v {\sim} F$ and each agent $i$ is informed of their own type $v_i\in \mathcal V_i$ only, thus the type constitutes private information based on which each agent chooses their action $b_i \in \mathcal A_i$. 
Each agent's \textit{ex-post} utility function is then determined by $u_i: \mathcal A \times \mathcal V_i \rightarrow \mathbb{R}$, i.e.\ the agent's utility depends on all agents' actions but only on their own type. Agents aim to maximize their individual utility or \emph{payoff} $u_i$. Throughout this paper, we denote by the index $-i$ a profile of types, actions or strategies for all agents but agent $i$. 

In this paper, we consider \emph{sealed-bid combinatorial auctions} (CA) on $\mathcal M = \{1, \dots, m\}$ items. In such an auction, each agent, or \emph{bidder}, is allocated a bundle $k_i \in \mathcal K \equiv 2^{\mathcal M}$ of items (possibly $k_i {=} \emptyset$). Each agent's types $v_i\in \mathcal V_i$ are given by a vector of \emph{private valuations} over bundles, i.e.\ $v_i \equiv (v_i(k))_{k\in\mathcal K}$. Bidders then submit actions, called \emph{bids} $b_i$, according to some bid-language: In the general case, where bidders might be interested in any combination of items, bids are in $\mathcal A_i \subseteq \R^{|\mathcal K|}_+$, i.e. each player must submit $2^m$ bids. In practice this is prohibitive, and one commonly studies settings where valuations exhibit some structure that allows reducing the dimensionality of both the type and strategy spaces. The settings we study in Section \ref{sec:results} have type and action spaces $\R_+$ or $\R_+^2$. 

After observing their own type $v_i$, bidders submit bids ${b_i {=} \beta_i(v_i)}$ chosen according to some \emph{strategy} or \emph{bid function} ${\beta_i: \mathcal V_i \rightarrow \mathcal A_i}$ that maps individual valuations to a probability distribution over possible actions.\footnote{\emph{Mixed} strategies that randomize over actions would also be possible, but we restrict ourselves to \emph{pure} or \emph{deterministic} strategies that choose a specific action with certainty, as most work in auction theory focuses on pure-strategy Bayesian Nash equilibria.} We denote by $\Sigma_i \subseteq \mathcal A_i^{\mathcal V_i}$ the resulting strategy space of bidder $i$ and by $\Sigma \equiv \prod_i \Sigma_i$ the space of possible joint strategies. Note that even for deterministic strategies, the spaces $\Sigma_i$ are infinite-\emph{dimensional} unless $\mathcal V_i$ are finite.
The auctioneer collects these bids, applies some \emph{auction mechanism} that determines (a) an allocation $x \in \mathcal K^n$; each bidder $i$ receives a (possibly empty) bundle $x_i {\in} \mathcal K$, s.t.\ the union of these bundles is disjoint, $\dot{\bigcup}_i x_i \subseteq \mathcal K$, i.e.\ each item $m\in \mathcal M$ is allocated to at most one bidder, and (b) payments $p\in \R^n$ that the agents have to pay to the auctioneer. For brevity, we will restrict ourselves to bidders with \emph{quasi-linear} utility functions given by $u_i: \mathcal V_i \times \mathcal A \rightarrow \R$,
\begin{equation}\label{eq:utility}
    u_i(v_i, b_i, b_{-i}) = v_i (x_i) - p_i,
\end{equation}
i.e.\ the utility of each player is given by how much she values the goods that she is allocated minus the price she has to pay for them.\footnote{Quasi-linear utilities correspond to risk-neutral bidders, but our method is also applicable to bidders with risk-averse utility functions, e.g. $u=\sqrt{v-p}$.} Throughout this paper, we will differentiate between the \emph{ex-ante} state of the game, where players know only the priors $F$, the \emph{ex-interim} state, where players additionally know their own valuation $v_i \sim F_{v_i}$, and the \emph{ex-post} state, where all actions have been played and $u_i(v,b)$ can be observed.

\paragraph{Equilibria in Bayesian games.}

In non-cooperative game theory, Nash equilibria (NE) are the central equilibrium solution concept. A set of bids $b^*$ is a pure-strategy NE of the complete-information game $G=(\mathcal{I}, \mathcal A, u)$ if $u_i(b_i^*, b_{-i}^*) \geq u_i(b_i, b_{-i}^*)$ for all $b_i \in \mathcal A_i$ and all $i \in \mathcal I$. In a NE no agent has an incentive to deviate unilaterally, given the equilibrium strategy of all other agents. Bayesian-Nash equilibria (BNE) extend this notion to incomplete-information games, calculating the expected utility $\overline{u}$ over the conditional distribution of opponent valuations $v_{-i}$.
For a valuation $v_i \in \mathcal V_i$, action $b_i \in \mathcal A_i$ and fixed opponent strategies $\beta_{-i} \in \Sigma_{-i}$, we denote the \emph{ex-interim utility} of bidder $i$ by 
\begin{equation}\label{eq:iterim-util}
    \overline{u}_i(v_i, b_i, \beta_{-i}) \equiv \expect[v_{-i} \vert v_i]{u_i\left(v_i, b_i , \beta_{-i}(v_{-i})\right)}.
\end{equation}
We also denote the \emph{ex-interim utility loss} of action $b_i$ incurred by not playing the best response action, given $v_i$ and $\beta_{-i}$, by
\begin{equation}\label{eq:ex-interim-loss}
    \overline \ell_i(b_i; v_i, \beta_{-i}) = \sup_{b'_i \in \mathcal A_i} \overline u_i(v_i, b'_i, \beta_{-i}) - \overline u_i(v_i, b_i, \beta_{-i}).
\end{equation}
Note that $\overline \ell_i$ can generally not be observed in online-settings because it requires knowledge of a best-response.

An ex-interim \emph{$\epsilon$-Bayes Nash Equilibrium ($\epsilon$-BNE)} is a strategy profile $\beta^* = (\beta^*_1, \dots, \beta^*_n) \in \Sigma$ such that no agent can improve her own ex-interim expected utility by more than $\epsilon \geq 0$ by deviating from the common strategy profile. Thus, in an $\epsilon$-BNE, we have:
\begin{align}\label{eq:BNE}
    \overline{\ell}_i\left(b_i; v_i, \beta^*_{-i}\right)  \leq  \epsilon \quad \textnormal{for all } i\in \mathcal I,  v_i \in \mathcal V_i \textnormal{ and } b_i \in \mathcal A_i.
\end{align}

A $0$-BNE is simply called BNE. Thus, in a BNE, every bidder's strategy maximizes her expected ex-interim utility given opponent strategies everywhere on her type space $\mathcal V_i$. While BNE are often defined at the \emph{ex-interim} stage of the game, we also consider \emph{ex-ante} Bayesian equilibria as strategy profiles that concurrently maximize each player's \emph{ex-ante} expected utility $\tilde u$. We analogously define $\tilde u$ and the \emph{ex-ante utility losses} $\tilde \ell$ of a strategy profile $\beta \in \Sigma$ by 
\begin{align}
	\tilde{u}_i(\beta_i, \beta_{-i}) &\equiv \expect[v]{u_i(v_i, \beta_i(v_i), \beta_{-i}(v_{-i}))}\\
	\overset{\text{eq. \ref{eq:iterim-util}}}&{=} \expect[v_i \sim F_{v_i}]{\overline{u}_i(v_i, b_i, \beta_{-i})}
\end{align}
and
\begin{align}
    \label{eq:ex-ante-loss} \tilde \ell_i(\beta_i, \beta_{-i}) &\equiv \sup_{\beta'_i \in \Sigma_i} \tilde u_i(\beta'_i, \beta_{-i}) - \tilde u_i(\beta_i, \beta_{-i}).
\end{align}
Then, an ex-ante BNE $\beta^* \in \Sigma$ can be characterized by the equations $\tilde \ell_i(\beta^*_i, \beta^*_{-i}) = 0$ for all $i \in \mathcal I.$
Clearly, every ex-interim BNE also constitutes an ex-ante equilibrium. The reverse holds almost surely, i.e.\ any ex-ante equilibrium fulfills Equation \ref{eq:BNE}, except possibly on a nullset $V{\subset}\mathcal V$, i.e. with $\int_V df_v(v) = 0$. To see this, one may consider the equation $0 = \tilde \ell (\beta^*) = \expect[v_i]{\overline \ell(\beta^*, v_i)}$ and the fact that $\overline \ell(\beta, v_i) \geq 0$ by definition. In this paper, we concern ourselves with finding ex-ante equilibria of auction games.

\begin{figure}
	\centering
	\includegraphics*[trim=0 21 0 5,clip,width=0.3\textwidth]{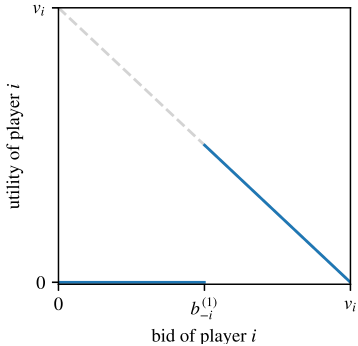}
	\caption{Discontinuous ex-post utility function $u_i(b_i)$ in single-item First-Price Sealed-Bid auction for fixed opponent bids $b_{-i}$ with highest opponent bid $b^{(1)}_{-i}$.}\label{fig:degenerate_gradient}
\end{figure}

\section{Related work}\label{sec:related}

\emph{Gradient dynamics in games} have been studied in evolutionary game theory and multiagent learning. While earlier work considered mixed strategies over normal-form games \citep{zinkevich2003OnlineConvexProgramming,bowling2002MultiagentLearningUsing,bowling2005ConvergenceNoRegretMultiagent,busoniu2008ComprehensiveSurveyMultiagent}, more recently, motivated by the emergence of GANs, there has been a focus on (complete-information) games with continuous action spaces and smooth utility functions \citep{mertikopoulos2019LearningGamesContinuous,letcher2019DifferentiableGameMechanics,balduzzi2018MechanicsNPlayerDifferentiable,schaefer2019CompetitiveGradientDescent}. A common result for many settings and algorithms is that gradient-based learning rules do not necessarily converge to Nash equilibria and may exhibit cycling behavior, but often achieve no-regret properties and thus converge to weaker Coarse Correlated equilibria (CCE). An analogous result exists for finite-type Bayesian games, where no-regret learners are guaranteed to converge to a Bayesian CCE \citep{hartline2015NoRegretLearningBayesian}. In the present paper, we study equilibrium learning via gradient dynamics in \emph{continuous-type} Bayesian games, specifically auctions, where they have not been investigated previously to our knowledge.\\

\emph{Equilibrium computation in auctions.} Earlier approaches to find equilibria in auctions were usually setting specific and relied on reformulating Equation \ref{eq:BNE} as a differential equation (where possible), then solving this equation analytically or numerically \citep{vickrey1961CounterspeculationAuctionsCompetitive,krishna2009AuctionTheory,ausubel2019CoreselectingAuctionsIncomplete}.  \citet{armantier2008ApproximationNashEquilibria} introduced a BNE-computation method that is based on expressing the Bayesian game as the limit of a sequence complete-information games. They show that the sequence of Nash equilibria in the restricted games converges to a BNE of the original game. While this result holds for any Bayesian game, setting-specific information is required to generate and solve the restricted games. \citet{rabinovich2013ComputingPureBayesianNash} study best-response dynamics on mixed strategies in auctions with finite action spaces. Most recently, \citet{bosshard2017ComputingBayesNashEquilibria,bosshard2020ComputingBayesNashEquilibria} proposed a method to find BNE in combinatorial auctions that relies on smoothed best-response dynamics and is applicable to any Bayesian game. The method explicitly computes point-wise best-responses in a fine-grained linearization of the strategy space via sophisticated Monte-Carlo integration. Our method, NPGA does likewise not require setting-specific information while also avoiding costly explicit best-response computations.

Some recent work has explored other topics at the intersection of deep learning and game theory. \cite{schuurmans2016DeepLearningGames} reformulate supervised NN training into finding NE in a corresponding game. \cite{dutting2017OptimalAuctionsDeep} use deep learning to design revenue-optimal and truthful auction mechanisms from the auctioneer's perspective. In contrast, we explore how to bid optimally from the perspective of an auction participant.

\section{Pseudogradient dynamics in auction games}\label{sec:our-method}

Next, we present our method for equilibrium computation in auctions, which we call Neural Pseudogradient Ascent (NPGA).
On a high level, we propose following the gradient dynamics of the game via simultaneous gradient ascent of all bidders. As we will see, however, computing the gradients themselves is not straightforward in the auction setting and we will need some modifications to established gradient dynamics methods such as \cite{zinkevich2003OnlineConvexProgramming,silver2014DeterministicPolicyGradient}. For now, assume that players observe a gradient-oracle $\nabla_{\beta_i} \tilde u_i(\beta_i, \beta_{-i})$ with respect to the current strategy profile $\beta^t$ in each iteration. Then the rule proposes that players perform a projected gradient update:
\begin{align}\label{eq:gradient-step}
  \beta_i^t \equiv \mathcal P_{\Sigma_i}\left(\beta_i^{t-1} + \Delta_i^t  \right ) \quad \text{ with } \quad \Delta_i^t \propto \nabla_{\beta_i} \tilde u_i(\beta_i, \beta_{-i}),
\end{align}
where $\mathcal P_{\Sigma_i}(\argdot)$ is the projection onto the set of feasible strategies for agent $i$. Several things must be noted about Equation \ref{eq:gradient-step}: First, we consider the gradient dynamics of the \emph{ex-ante} utility $\tilde u$, rather than ex-interim or ex-post utilities. The goal of an individual update step is thus to marginally improve the expected utility of player $i$ across all possible joint valuations $v \sim F$. This perspective ultimately considers low-probability events less important than high-probability events, which is in contrast to some other methods, which explicitly aim to optimize \emph{all} ex-interim states \citep{bosshard2017ComputingBayesNashEquilibria}. Second, to compute the gradient oracle $\nabla_{\beta} \tilde u$ in self-play, we rely on access to other players strategies, but evaluating each player's policy relies only on their own valuation. We thus follow the centralized-training, decentralized-execution framework common in multi-agent learning. Third, $\beta_i \in \Sigma_i$ are functions in an infinite-dimensional function space, so the gradient $\nabla_{\beta_i} \tilde u_i$ is itself a \emph{functional} derivative. In our ex-ante perspective, we thus consider this to be the Gateaux derivative over the Hilbert space $\Sigma_i$, equipped with the inner product $\langle \psi, \beta_i \rangle = \expect[v_i\sim F_{v_i}]{\psi(v_i)^T \beta_i(v_i) }$ (which, in turn, defines the projection in Equation \ref{eq:gradient-step} as $\mathcal P_{\Sigma_i}(\beta) \equiv \arg\min_{\sigma\in\Sigma_i} \langle \sigma-\beta,\sigma-\beta \rangle$).

\paragraph{Policy networks.} To implement this derivative in practice, we represent each bidder's strategy by a \emph{policy network}
$\beta_i(v_i) \equiv \pi_i(v_i; \theta_i)$
specified by a neural network architecture and a corresponding parameter vector $\theta_i \in \R^{d_i}$. Importantly, given a suitable network architecture, one can ensure that all $\theta_i$ always yield feasible bids, thus making the projection in the update step obsolete. In the empirical part of this study, we restrict ourselves to fully-connected feed-forward neural networks with ReLU activations in the output layer, which ensure nonnegative bids---the only feasibility constraint in the auctions we study. In any case, $d_i \in \N$ is finite and we thus transform the problem of choosing an infinite-dimensional strategy into choosing a finite-dimensional parameter vector $\theta_i$.

\paragraph{Policy pseudogradients.} The \emph{deterministic policy gradient theorem} \citep{silver2014DeterministicPolicyGradient} gives an established, canonical way to compute the payoff gradient with respect to the parameters $\theta$: $\nabla_{\theta_i} \tilde u_i(\pi_i(\argdot; \theta_i), \beta_{-i}) = {\expect[v\sim F]{\nabla_{\theta_i} \pi(v_i; \theta_i) \nabla_{b_i}u_i(v_i; b_i, \beta_{-i}(v_{-i}))\vert_{b_i = \pi_i(v_i; \theta_i)}}}$. 
However, the regularity conditions required by the theorem are commonly violated in combinatorial auctions. In particular, due to the discrete nature of the allocations $x$, the ex-post utilities $u_i(v_i,b_i,b_{-i})$ are usually discontinuous---and thus not (sub)differentiable in $b_i$. While this nondifferentiability does not extend to $\tilde u$, it nevertheless renders the policy gradient formula above inapplicable: Although the set of discontinuities is a $v$-nullset in practice, its true gradient provides systematically misleading signals, even on the differentiable intervals of $u_i(v_i, {\argdot}, b_{-i})$: Consider a first-price sealed-bid auction in which winning bidders pay their bid amount $b_i$. The utility graph is separated into two sections (see Figure \ref{fig:degenerate_gradient}): (a) Bidding lower than the highest opposing bid leads to zero payoff and thus no learning feedback, $\nabla_{b_i} {u}_i {=} 0$, and (b) winning with a the highest bid \emph{must} yields feedback to decrease the bid, $\nabla_{b_i} {u}_i {=} -1$. Back-propagation will thus lead to a steady decrease of bids, until all players bid constant zero at any valuation.

To alleviate this, we instead estimate the policy gradient using a finite difference approach based on evolutionary strategies (ES) \cite{salimans2017EvolutionStrategiesScalable}. To calculate $\nabla_\theta \tilde u$, we perturb the parameter vector $P$ times, $\theta_{i;p} \equiv \theta_i + \varepsilon_p$, using zero-mean Gaussian noise $\varepsilon_p {\sim} \mathcal N(0, \sigma^2)$ for $p\in \{1,\dots,P\}$, where $P, \sigma$ are hyperparameters. We then calculate each perturbation's \emph{fitness}, $\varphi_p \equiv \tilde{u}_i(\pi_i(v_i; \theta_{i;p}), \beta_{-i})$, via Monte-Carlo integration, and estimate the gradients as the fitness-weighted perturbation noise $\nabla_\theta^{ES} \equiv \frac{1}{\sigma^2 P}\sum_{p} \varphi_p\varepsilon_p$. \citet{salimans2017EvolutionStrategiesScalable} motivated their application of this ES gradient estimate to reinforcement learning because it's applicable to parallelization across large-scale CPU-clusters, but here we instead exploit its property that it gives an asymptotically unbiased estimator of $\nabla_\theta \tilde u$ even when $\nabla_b u$ itself is not well-defined. To summarize, NPGA ``implements'' Equation \eqref{eq:gradient-step} via ES-pseudogradients and a neural network parametrization of strategy functions which renders the projection step unnecessary:
\begin{align}\label{eq:npga-step}
  \beta_i^t \equiv \pi_i(\argdot; \theta_i^t) \quad \text{with} \quad \theta_i^t \equiv \theta_i^{t-1} + \Delta_i^t  \quad \text{where} \quad \Delta_i^t \propto \nabla_{\theta_i^t}^{ES}.
\end{align}

\paragraph{Vectorizing auction evaluations.} The only information about the game $G$ needed in the computation of this learning rule is the evaluation of $\tilde u = \expect[v\sim F]{u}$ for a given strategy profile. Given a vectorized implementation of the joint ex-post utility $u$, estimating $\tilde u$ via Monte-Carlo integration over $\mathcal V$ is suitable to parallel execution on hardware accelerators such as GPUs. To this end, we built custom vectorized implementations of many common auction mechanisms using the PyTorch framework \citep{paszke2017AutomaticDifferentiationPyTorch}, enabling Monte-Carlo estimation multiple orders of magnitude faster compared to previous numerical work on auctions. For moderately sized auction games commonly studied in the literature, allocations $x$ can be computed in a vectorized fashion via full enumeration of feasible allocations.\footnote{We stress that, with current hardware, this approach remains intractable for larger auctions that are applied in the real world, e.g. spectrum auctions.} Common payment rules either have inherently vectorizable closed-form formulation (e.g.\ first-price auctions) or can be reformulated as the solution of a constrained quadratic program (e.g.\ the Vickrey-Clarke-Groves (VCG) mechanism or core-selecting pricing rules \cite{day2012QuadraticCoreSelectingPayment}). To solve a large batch of the latter in parallel, we leverage a custom vectorized implementation of interior-point methods. (A similar approach has previously been used by \cite{amos2019OptNetDifferentiableOptimization}.)
\paragraph{A convergence criterion.}\label{subsec:convergence}

As discussed in Section \ref{sec:related}, gradient dynamics do not converge to Nash equilibria in general. For differentiable, finite-dimensional, complete-information games (auctions are neither!), \citet{mertikopoulos2019LearningGamesContinuous} show that strict monotonicity of the payoff gradients is a sufficient condition for almost-sure convergence of gradient dynamics to a unique Nash equilibrium as it leads to strict concavity of the game. \citet{ui2016BayesianNashEquilibrium} shows an analogous result for ex-post differentiable Bayesian games, in which payoff-monotonicity guarantees the existence of a unique BNE. However, the result likewise does not directly apply to auctions due to their ex-post nondifferentiability. Instead, we give a slightly less restrictive criterion based on ex-interim payoff monotonicity that ensures convergence of gradient dynamics and whose formulation is compatible with auction games.

\begin{definition}[Strict Ex-interim Payoff Monotonicity]\label{def:ex-i-monotonicity}
	Let $G = (\mathcal I, \mathcal A, \mathcal V, F, u)$ be a Bayesian game, such that the individual ex-interim utilities are continuously differentiable in $b_i$
	 with gradients bounded by a constant $Z{>}0$ via $\lVert \nabla_{b_i} \overline u_i(v_i, b_i, \beta_{-i})\rVert \leq Z$.
	 $G$ is called \emph{strictly (ex-interim) payoff-monotone}, if for all $i{\in} \mathcal I$,  ${\beta_{-i} {\in} \Sigma_{-i}}$, ${a_i, b_i \in \mathcal A_i}$ and almost everywhere $v_i {\in} \mathcal V_i$ the following holds:
	\begin{equation}\label{eq:monotonicity}
	  \langle \nabla_{a_i} \overline u_i(v_i, a_i, \beta_{-i}) -  \nabla_{b_i} \overline u_i(v_i, b_i, \beta_{-i}), a_i - b_i \rangle \; < \; 0.
  \end{equation}
  \end{definition}

While analytical verification of this criterion is elusive, except in special settings, it can (approximately) be checked numerically by sampling pairs of action profiles $a,b$ for all players and using finite-difference gradient-estimators. Doing so, we observe consistency with the criterion in all settings studied in Section \ref{sec:results}. In the following, we provide a convergence result for NPGA under ex-interim monotonicity. For our convergence analysis, we will rely on certain properties of ``appropriate'' neural network architectures:

\begin{definition}[Regular Convex Policy Network]\label{def:npga-policy-net}
  A \emph{Regular Convex Policy Network} is a neural network $\pi_i: \mathcal V_i \times \Theta_i \rightarrow \mathcal A_i$ with $\dim \Theta_i = d_i$ and the following properties: 
  \begin{enumerate}
	\item $\pi_i$ is a \emph{convex neural network} in its parameters: For any convex function $g{:}\ \Sigma_i {\rightarrow} \R$, the map $\theta_i \mapsto g(\pi_i(\argdot, \theta_i))$ is convex.
	\item $\pi_i$ \emph{universally approximates} $\Sigma_i$: There exists a $\delta >0$, s.t. for all $\beta_i \in \Sigma_i$ there is a parameter vector $\theta_i \in \Theta_i$ with $\expect[v_i]{\lVert \beta_i(v_i) - \pi_i(v_i, \theta_i)\rVert} \leq \delta$.
    \item \emph{Regularity:} $\pi_i$ is Lipschitz-continuous in its parameters in the sense that there's some $L{>}0$ such that for all $\theta_i, \theta_i' {\in} \Theta_i$ we have
    ${\expect[v_i]{\lVert \pi_i(v_i, \theta_i)- \pi_i(v_i, \theta'_i)\rVert} \leq L \lVert \theta_i - \theta'_i \rVert}$.
  \end{enumerate}
\end{definition}

Neural networks that are employed in practice (and in our empirical analysis) generally do not comply with Definition \ref{def:npga-policy-net}, but such networks have been shown to exist, see e.g. \citet{bach2017BreakingCurseDimensionality}, who studies wide single-hidden-layer networks with ReLU activations, in which only the output-layer weights are being trained. We'll state our main proposition before discussing this difference further:

\begin{proposition}\label{thm:proposition}
Let $G = (\mathcal I, \mathcal A, \mathcal V, F, u)$ be a Bayesian game such that the ex-post utilities exist, and such that the ex-interim payoff-gradients exist and fulfill strict ex-interim payoff monotonicity. Then, with an NN architecture as in Definition \ref{def:npga-policy-net} and appropriate update step sizes, NPGA converges to an ex-ante $\epsilon$-BNE of $G$, where $\epsilon \leq Z(2L\sigma\sqrt{d}+\delta)$.
\end{proposition}

While existence and unique\-ness of BNE in infinite-dimensional games are unknown in the general case, Proposition \ref{thm:proposition} guarantees efficient computability in a wide range of settings, some of which we explore in the next section. Still, it's important to note that there may be auctions for which payoff-monotonicity does not hold.

A proof outline Proposition \ref{thm:proposition} is given at the end of this article in Appendix \ref{app:proof}, for additional technical derivations, see the supplementary material. As demonstrated in the proof, the use of Regular Convex Policy Networks transforms the training process into a problem of finding a Nash Equilibrium in a \emph{concave}, finite-dimensional, complete-information game. Crucially, concavity of this game ensures existence of, and convergence to, a unique global equilibrium. Just as neural networks are known to find ``good'' solutions to nonconvex optimization problems in practice despite a lack of theoretical guarantees, we will see below that we observe convergence to BNE when using standard neural network architectures that don't meet Definition \ref{def:npga-policy-net}: As such, we see Regular Convex Policy Networks as a helpful tool for theoretical analysis, but their implementation is generally neither practical nor desirable while common architectures achieve similar results.

\section{Empirical Results}\label{sec:results}

We evaluate NPGA on three suites of auction theoretic settings: First, in Section \ref{subsec:single-item} we validate our method on a suite of the most commonly studied auctions, i.e. single-item auctions with symmetric priors, before considering two suites of combinatorial auctions, the LLG (Section \ref{subsec:LLG}) and LLLLGG (Section \ref{subsec:LLLLGG}) environments. In total, we study 21 different auction settings with different numbers of players, pricing rules, risk-profiles, and prior distributions of the valuations. In 18 of these settings, the (unique) BNE is known analytically, in three settings, no BNE is known.

\paragraph{Evaluation Metrics} To evaluate the quality of strategy-profiles $\beta$ learned by NPGA, we will provide four metrics, whenever available: When we have access to the analytical solution BNE $\beta^*$, we can simply check whether $\beta \rightarrow \beta^*$. To do so, we report each agent's 
\begin{enumerate}
	\setcounter{enumi}{\value{evalMetricCounter}}
	\item utility loss $\ell^*_i$ that results from unilaterally deviating from the BNE strategy profile $\beta^*$ by playing the learned strategy $\beta_i$ instead: $\ell^*_i \equiv \tilde \ell_i(\beta_i, \beta^*_{-i})$, compare Equation \ref{eq:ex-ante-loss},
	\item and the distance $\lVert \beta_i - \beta^*_i\rVert_\Sigma$ in strategy space. \setcounter{evalMetricCounter}{\value{enumi}}
\end{enumerate} 
Both of these can be estimated via Monte-Carlo integration over the valuations $v{\sim} F$, i.e. given a batch of size $H$ of valuations $(v_{h;i}, v_{h,-i})$, we approximate $\ell^*_i$ by the sample mean of $\tilde \ell_i({\beta_i(v_{h;i}),} {\beta^*_{-i}(v_{h;-i})})$ and $\lVert \beta_i - \beta^*_i\rVert$ by the RMSE of $\beta_i(v_{h;i})$ and $\beta_i^*(v_{h;i})$ in action-space. 

However, to make NPGA applicable in practice, we are also interested in judging the quality of $\beta$ when no BNE is known. To do so, we also estimate the potential gains of deviating from the current strategy profile $\hat \ell_i \approx \tilde \ell_i (\beta_i; \beta_{-i})$ as well as an estimator $\hat \epsilon$ to the ``true'' epsilon of $\beta$ (smallest $\epsilon$ s.t.\, $\beta$ forms an ex-interim $\epsilon$-BNE). In the absence of analytical solutions, one may periodically calculate these estimators and use them as a termination criterion once the desired precision is reached. As we will see, these additional metrics are expensive: To calculate the estimators $\hat \ell$ and $\hat \epsilon$ we introduce a grid of $W$ equidistant points $b_w$ per bidder, covering the action spaces $\mathcal A_i$. Now, given a valuation $v_i$ and a bid $b_i$, we approximate the ex-interim utility loss $\overline \ell(v_i, b_i, \beta_{-i})$ of $b_i$ at $v_i$ via
\begin{equation}
	\hat \lambda_i(v_i; b_i, \beta) \equiv \frac{1}{H}\max_w \sum_h u\left(v_i; b_w, \beta_{-i}(v_{h,-i})\right) - u\left(v_i; b_i, \beta_{-i}(v_{h,-i})\right).
\end{equation}
Note that the batch $H$ only runs across opponent valuations $v_{-i}$. To evaluate $\hat \lambda_i$ at a single $v_i$, we thus need ${(W{+}1) {\cdot} H}$ auction evaluations. We can then estimate the 

\begin{enumerate}
	\setcounter{enumi}{\value{evalMetricCounter}}
	\item worst-case ex-interim loss: $\hfill \hat \epsilon = \, \max_h{\hat\lambda_i(v_{h,i}; \beta_i(v_{h,i}), \beta_{-i})}$,
	\item and the ex-ante loss: $\hfill \hat \ell = \frac{1}{H} \sum_h{\hat\lambda_i(v_{h,i}; \beta_i(v_{h,i}), \beta_{-i})}$.
\end{enumerate}
Estimations of $\hat \lambda$ can be shared for both computations, nevertheless we need $\mathcal O(nWH^2)$ auction evaluations to calculate these metrics (in contrast, an iteration of NPGA requires only $\mathcal O(nPK)$ evaluations, with {$P{\ll} W$}). As the metrics $\hat \epsilon, \hat \ell$ are expensive to compute with dense grids, we use smaller batch sizes $W$ and $H$ than in evaluating $\ell^*$, and calculate these metrics only once every 100 iterations of the algorithm.

\paragraph{Hyperparameters} We use common hyperparameters across almost all settings (except where noted otherwise): Fully connected neural networks with two hidden layers of 10 nodes each with SeLU \citep{klambauer2017SelfNormalizingNeuralNetworks} activations. ES-parameters $P{=}64$, $\sigma {=} \frac{1}{\sqrt{d_i}}$. We use Adam optimizer steps with default hyperparameters as suggested in \cite{kingma2017AdamMethodStochastic}. To avoid degenerate initializations of $\theta$ (e.g. where one or more bidders bid constant zero due to dead ReLUs in the output layer), we perform supervised pre-training to the \emph{truthful strategy} $\beta_i(v_i) = v_i$. All experiments were performed on a single Nvidia GeForce 2080Ti and batch sizes in Monte-Carlo sampling were chosen to maximize GPU-RAM utilization:
A learning batch size of $K = 2^{18}$; primary evaluation batch size (for $\ell^*, ||\beta-\beta^*||$) of $H = 2^{22}$; and secondary evaluation batch size $H=2^{12}$ and grid size $W=2^{10}$ (for $\hat \ell, \hat \epsilon$). Each experiment is repeated ten times over 5{,}000 iterations each.

\begin{figure}
	\centering
	\includegraphics*[trim=0 13 0 0,clip,width=0.45\textwidth]{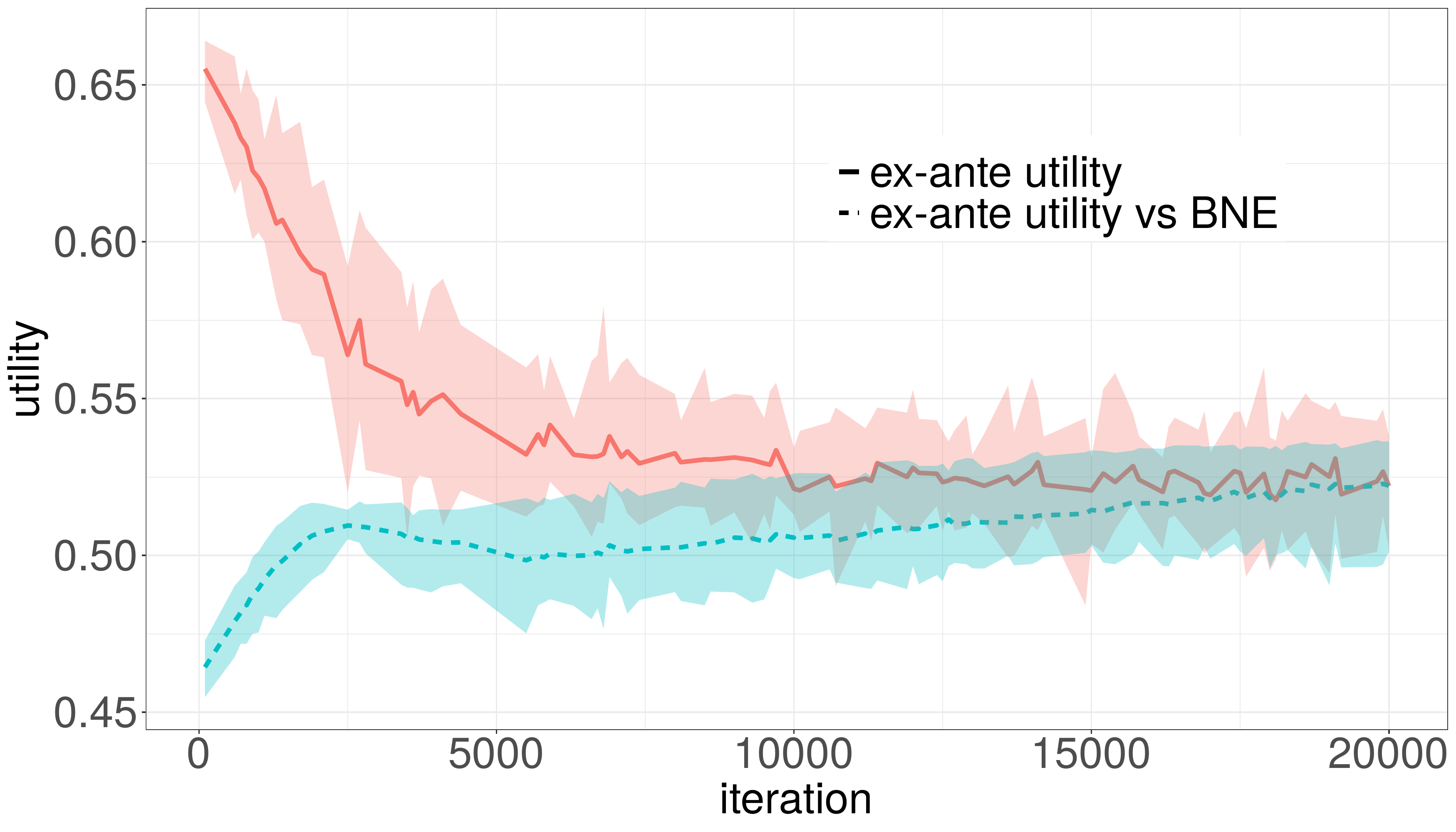}
	\caption{Learning curve of NPGA in 10-player FPSB auction with Gaussian priors, evaluated in self-play (red, solid) and against the BNE (blue, dotted). Line and shaded area indicate mean/min/max over 10 repetitions.}\label{fig:learning-curve}
\end{figure}

\subsection{Single-Item Auctions}\label{subsec:single-item}


\begin{table}[ht]
	\caption{Performance of strategies learned by NPGA in single-item first-price sealed-bid auctions. Results are averaged over 10 runs of 5{,}000 (uniform risk-neutral) or 20{,}000 (other settings) iterations each.}
	\resizebox{0.45\textwidth}{!}{
    \begin{tabular}{crrrrrrr}
	\multirow{2}{*}{\textbf{valuations}} & 
	\multirow{2}{*}{\textbf{n}} & 
	\multirow{2}{*}{\textbf{$\ell^*$}} & 
	\multirow{2}{*}{\textbf{$\lVert\beta^*{-}\beta\rVert$}} & 
	\multirow{2}{*}{$\hat{\ell}$} & \multirow{2}{*}{$\hat{\epsilon}$}  &
	 \multicolumn{1}{l}{\multirow{2}{*}{\begin{tabular}[c]{@{}l@{}}\textbf{time}\\ sec/it\end{tabular}}}
    \\  \\\toprule
    \multirow{4}{*}{\textbf{\begin{tabular}[c]{@{}c@{}}Uniform\\ $\mathcal U(0,10)$\\risk-neutral \end{tabular}}} &
	  \textbf{2} 	& 0.0000 & 0.0072 & 0.0011 & 0.0059 & 0.31\\
	  &\textbf{3} 	& 0.0001 & 0.0104 & 0.0007 & 0.0051 & 0.40\\
	  &\textbf{5} 	& 0.0001 & 0.0194 & 0.0005 & 0.0053 & 0.46\\
	  &\textbf{10} & 0.0001 & 0.0303 & 0.0003 & 0.0047 & 0.73\\ \cline{1-7}
	 \multirow{4}{*}{\textbf{\begin{tabular}[c]{@{}c@{}}Uniform\\$\mathcal U(0,10)$\\risk-averse \end{tabular}}} & 
	 \textbf{2}		& 0.0003 & 0.0057 & 0.0012 & 0.0065 & 0.46\\
	& \textbf{3} 	& 0.0001 & 0.0069 & 0.0008 & 0.0048 & 0.52\\
	& \textbf{5}	& 0.0001 & 0.0161 & 0.0006 & 0.0066 & 0.63\\
	& \textbf{10} 	& 0.0002 & 0.0383 & 0.0005 & 0.0085 & 0.93\\\cline{1-7}
	 \multirow{4}{*}{\textbf{\begin{tabular}[c]{@{}c@{}}Gaussian\\ $\mathcal N(15,10^2)$ \\ risk-neutral \end{tabular}}} &
	 \textbf{2} 	& 0.0079 & 0.3684 & 0.0443 & 0.4394 & 0.31\\
	& \textbf{3} 	& 0.0103 & 0.4478 & 0.0225 & 0.9723 & 0.39\\
	& \textbf{5} 	& 0.0172 & 0.8819 & 0.0176 & 1.7324 & 0.45\\
	& \textbf{10} 	& 0.0169 & 1.8801 & 0.0118 & 2.1660 & 0.68\\ \bottomrule 
	
    \end{tabular}}

		\label{table:single-item_table} 
\end{table}

First-price sealed-bid (FPSB) auctions on a single item, in which the highest-bidding player wins the item and pays her own bid as price, are the best-known auctions and for many configurations their BNE are known analytically \cite{krishna2009AuctionTheory}. We apply NPGA to 12 such FPSB settings with 2, 3, 5 and 10 bidders with uniform- and normal-distributed valuations and risk-neutral and risk-averse 
utility functions. The results are given in Table \ref{table:single-item_table}.

In all settings, we observe convergence of NPGA to a close approximation of the analytical BNE in terms of ex-ante payoff, both when evaluated in self-play and against opponents playing the BNE. However, we also see that sometimes there's no full norm-convergence in the strategy space: This indicates that NPGA learns \emph{ex-ante} BNE as the algorithm is designed to do, but may bid suboptimally in ``unimportant'' regions of the valuation space, e.g. when there are many players and $i$'s valuation is low (see Gaussian-10p setting): Learning a highly-nuanced signal in these regions would require a larger sample-size. Similarly, the estimator $\hat\epsilon$ tends to be pessimistic, as expected. Nevertheless, we see that the strategy-space distance is low in most settings, and even when it is not, the learned strategy becomes indistinguishable from the BNE in terms of ex-ante utility: Figure \ref{fig:learning-curve} shows the learning curve for the Gaussian-10p setting where the norm has not converged. Additionally, we observe that the exploitability-estimate $\hat\ell$, while not exactly equal to $\ell^*$, is consistent in order of magnitude and may thus serve as a suitable proxy for convergence in the absence of known BNE. 

For a complete treatment of the single-item setting, we also implemented Vickrey/Second Price auctions, where NPGA consistently found the BNE (all metrics ${<}10^{{-}4}$), as well as learning using the canonical policy gradient theorem instead of pseudogradients, which, as expected (see Section 4), consistently failed to learn the BNE and instead lead to all players bidding 0 everywhere.

\subsection{Combinatorial Auctions}

\emph{Local-global combinatorial auctions} will serve as our main benchmark for BNE compuation. In such auctions, there are two groups of bidders, locals and globals: Globals $g$ are interested in larger bundles of items while their priors allow them to draw higher valuations, so local bidders $l$ need to coordinate to outbid the globals. We consider settings where locals have either independent or correlated uniform priors $v_{ik} \sim \mathcal U(0, \overline{v}_i)$ with $\overline{v}_l = 1, \overline{v}_g = 2$ (for each bundle $k\in\mathcal K_i$). The 3-player LLG setting (Section \ref{subsec:LLG}) is a standard setting in auction theory and one of the smallest CAs that requires strategic cooperation between bidders; the larger 6-player LLLLGG setting (Section \ref{subsec:LLLLGG}) was proposed by \citep{bosshard2017ComputingBayesNashEquilibria} and, to our knowledge, is the most complex environment in which approximate BNE have been computed to date.
\begin{table}[htp]
	\caption{Results of NPGA in LLG-settings with independent and correlated valuations. Values are means of 10 runs of $5{,}000$ iterations. For FPSB, no BNE is known; for correlated priors, estimating $\hat \ell, \hat \epsilon$ is not straightforward\textsuperscript{4}. Negative $\ell^*$ are artefacts of the sample variance of $F_v$ at available precision.}\label{table:auction-results-llg}  
	\centering
    \footnotesize\begin{tabular}{lllrrrrrr}
    \multirow{2}{*}{\footnotesize{\textbf{priors}}} &
	\multirow{2}{*}{\footnotesize\textbf{\footnotesize\textbf{payment}}} & \multirow{2}{*}{\footnotesize\textbf{bidder}} &
    \multirow{2}{*}{$\ell^*$} & \multirow{2}{*}{\footnotesize{$\lVert\beta^*{-}\beta\rVert$}} & 
    \multirow{2}{*}{$\hat{\ell}$} &      \multirow{2}{*}{$\hat{\epsilon}$} & 
      {\multirow{2}{*}{\begin{tabular}[r]{@{}r@{}}\footnotesize\textbf{time}\\\footnotesize{sec/it} \end{tabular}}} \\ 
     \\
     \toprule
     \parbox[t]{2mm}{\multirow{8}{*}{\rotatebox[origin=c]{90}{\footnotesize\textbf{independent}}}}
	&\multirow{2}{*}{\footnotesize\textbf{\begin{tabular}[c]{@{}c@{}}n.-VCG \end{tabular}}}
	& \footnotesize\textbf{locals} & 0.0001 & 0.0050 & 0.0002 & 0.0009 &
	 \multirow{2}{*}{0.84} \\
	&& \footnotesize\textbf{global} & 0.0000 & 0.0269 & 0.0000 & 0.0001 \\	
		\cline{2-8}
	&\multirow{2}{*}{\footnotesize\textbf{\begin{tabular}[c]{@{}c@{}}n.-bid \end{tabular}}}
	& \footnotesize\textbf{locals} & -0.0002 & 0.0073 & 0.0003 & 0.0013 &
	 \multirow{2}{*}{0.79} \\
	&& \footnotesize\textbf{global} & 0.0000 & 0.0424 & 0.0000 & 0.0001  \\
		\cline{2-8}
	&\multirow{2}{*}{\footnotesize\textbf{\begin{tabular}[c]{@{}c@{}}n.-zero \end{tabular}}}
	&\footnotesize\textbf{locals} & -0.0001 & 0.0078 & 0.0002 & 0.0019 &
	  \multirow{2}{*}{0.79} \\
	&& \footnotesize\textbf{global} & 0.0000 & 0.0088 & 0.0000 & 0.0001 & \\ 
		\cline{2-8}
	&\multirow{2}{*}{\footnotesize\textbf{\begin{tabular}[c]{@{}c@{}}FPSB \end{tabular}}}
	& \footnotesize\textbf{locals} & -- & -- & 0.0009 & 0.0031 &
	   \multirow{2}{*}{0.65} \\
	&& \footnotesize\textbf{global} & -- & -- & 0.0016 & 0.0064 &\\ 
	\hline
	
	\parbox[t]{2mm}{\multirow{6}{*}{\rotatebox[origin=c]{90}{\footnotesize\textbf{correlated}}}}
	&\multirow{2}{*}{\footnotesize\textbf{\begin{tabular}[c]{@{}c@{}}n.-VCG \end{tabular}}}
	& \footnotesize\textbf{locals} & -0.0001 & 0.0042 & -- & -- & \multirow{2}{*}{0.80}\\
	&& \footnotesize\textbf{global} & 0.0000 & 0.0305 & -- & -- &\\
		\cline{2-8}
	&\multirow{2}{*}{\footnotesize\textbf{\begin{tabular}[c]{@{}c@{}}n.-bid \end{tabular}}}
	& \footnotesize\textbf{locals}& 0.0003 & 0.0064 &  -- & --   &\multirow{2}{*}{0.83}\\
	&& \footnotesize\textbf{global}& 0.0000 & 0.0498 &  -- & 	-- &\\	
		\cline{2-8}
	&\multirow{2}{*}{\footnotesize\textbf{\begin{tabular}[c]{@{}c@{}}n.-zero \end{tabular}}}
	& \footnotesize\textbf{locals} & 0.0001 & 0.0059 &  -- &  -- &\multirow{2}{*}{0.81}\\
	&& \footnotesize\textbf{global} & 0.0000 & 0.0072 &   -- & 	-- &\\    	
	\bottomrule
	\end{tabular}
\end{table}

\subsubsection{The LLG setting}\label{subsec:LLG}
The LLG setting includes two local bidders and one global bidder that bid on $m=2$ items. Local bidders $i=1,2$ are each interested in the bundle $\{i\}$, while the global bidder wants the package $\{1,2\}$ of both items. Each bidder submits a bid $b_i {\in} \R_+$ for their respective bundle. The setting has been extensively studied in the context of different \emph{core-selecting} pricing rules as commonly used in real-world spectrum auctions \citep{day2012QuadraticCoreSelectingPayment,goeree2016ImpossibilityCoreselectingAuctions}. Closed-form solutions of the unique, symmetric BNE under three such rules are known in the LLG setting for both independent and correlated priors: the nearest-VCG rule, the nearest-zero (or proxy) rule, and the nearest-bid rule. The interested reader is referred to \cite{ausubel2019CoreselectingAuctionsIncomplete} for details. In these rules, it has been shown that the global bidder is bidding truthfully in the BNE. The local bidders' BNE strategies differ in each payment rule and depending on the correlation between locals' priors. We evaluate NPGA on the three core payment rules with independent and correlated priors (correlation coefficient of local bidders: $\gamma = 0.5$) as well as the first-price payment rule with independent priors, for which no exact BNE is known.
Numerical results for all rules are presented in Table \ref{table:auction-results-llg}. We again observe that NPGA converges to the BNE in all six-settings where it is known. In fact, after low hundreds of iterations, we can no longer detect a difference in utility to the true BNE with available measurement precision, while still observing slight differences in strategy-space distance: Figure \ref{fig:llg-ngsp-learning} depicts the strategy learned by NPGA after 5{,}000 iterations in comparison to the analytical BNE strategy for the nearest-zero payment rule, and shows an almost perfect fit. In the FPSB auction, no BNE is known, but values of $\hat \ell\approx 10^{-3}$ (while the global (local) bidders achieve an ex-ante utility of $0.426$ ($0.149$)) indicate that exploitability of $\beta$ is minuscule.

\begin{figure}
	\centering

		\includegraphics*[trim=0 13 0 0,clip,width=0.45\textwidth]{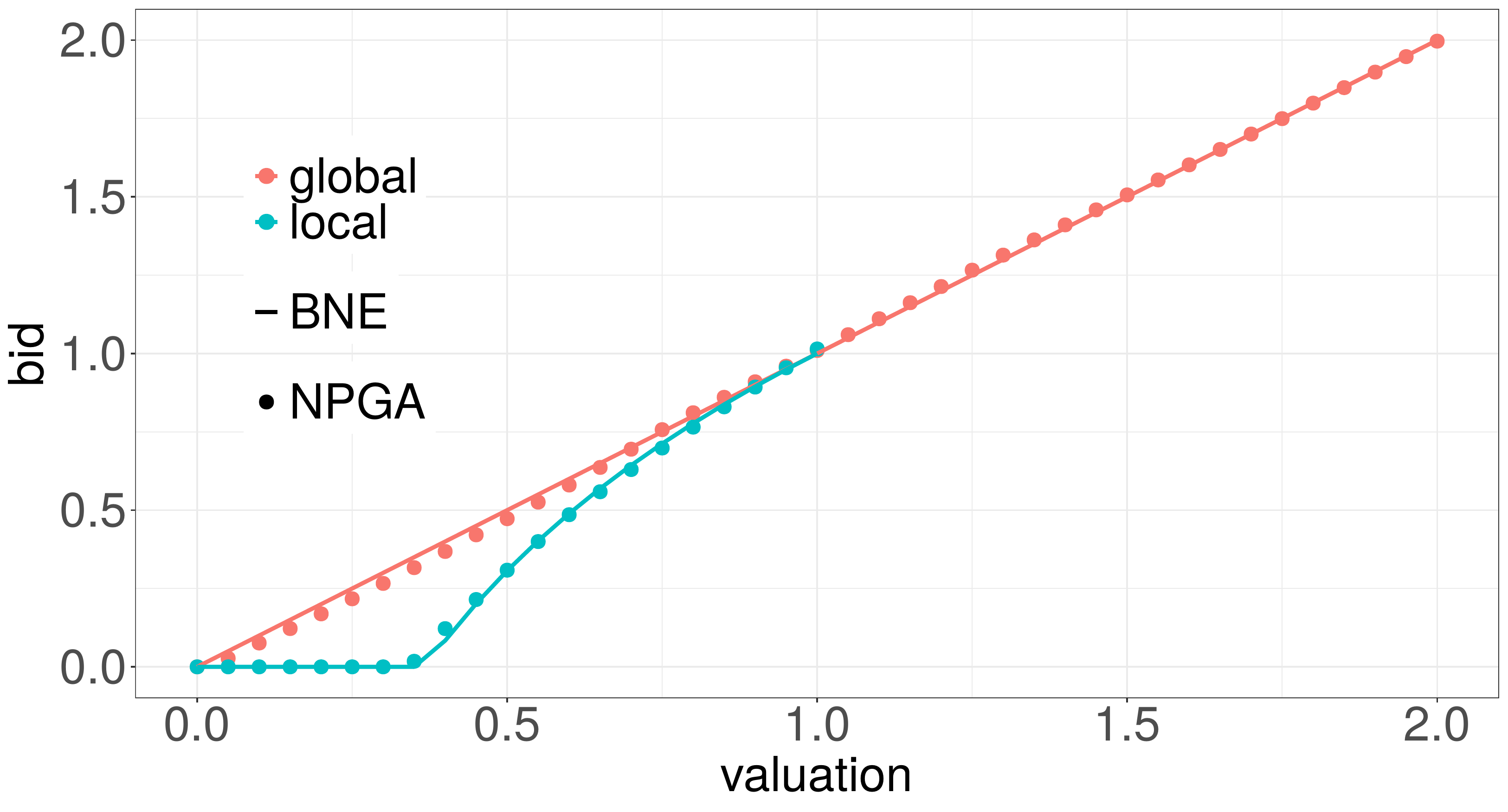} 
		\caption{Learned (dots) and BNE (lines) strategies in LLG-setting with nearest-zero core payment rule.}
		\label{fig:llg-ngsp-learning}
\end{figure}
\begin{figure}
		\centering
		\includegraphics*[trim=0 13 0 0,clip,width=0.45\textwidth]{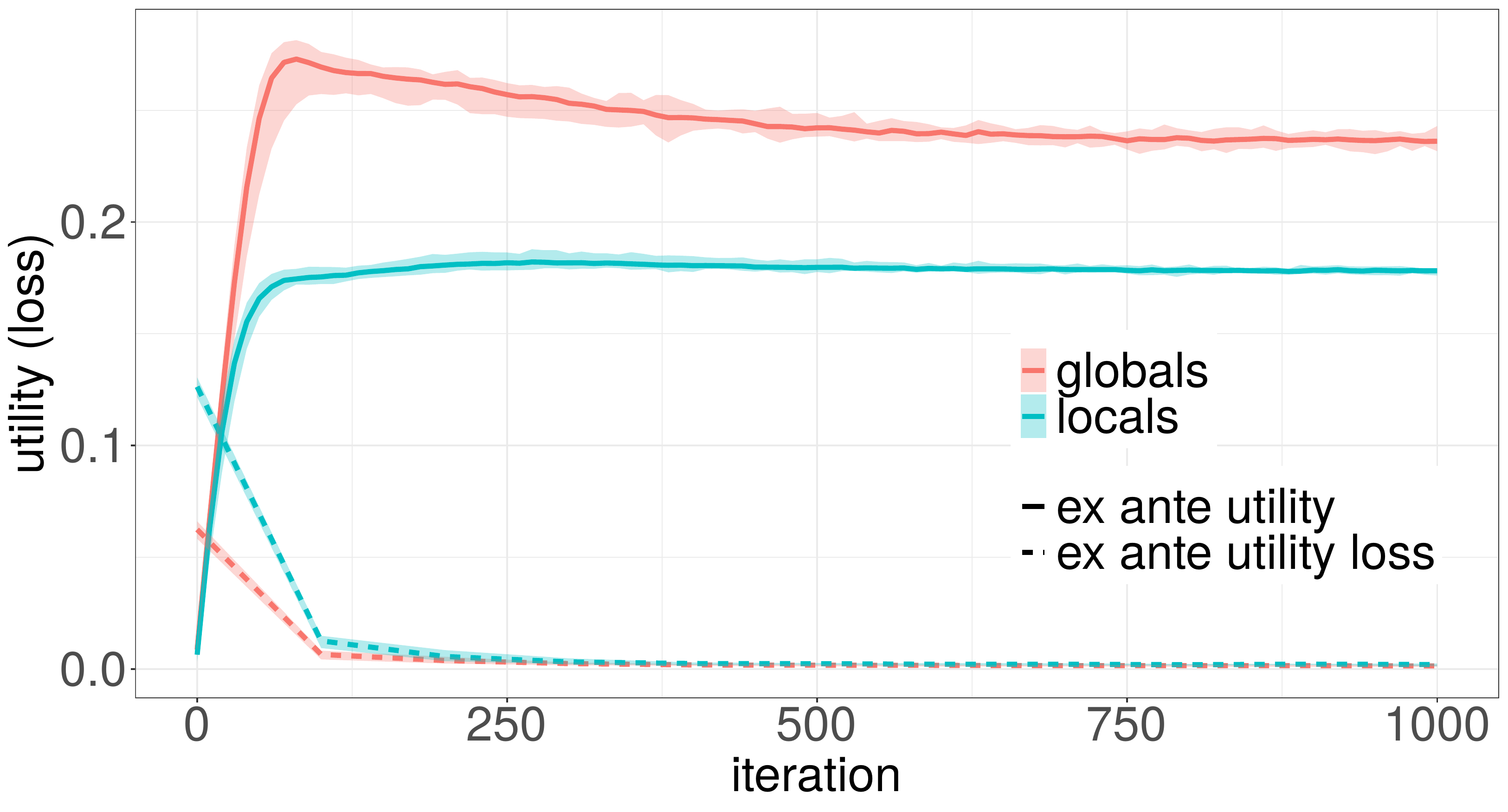}
		\caption{Ex-ante utility $\tilde u$ and estimated loss $\hat \ell$ of in NPGA self-play in the LLLLGG first-price auction. Shaded area and line show min, max, mean over 10 repetitions.}
		\label{fig:llllgg-fp}
\end{figure}

As noted above, estimating $\hat \ell$ and $\hat \epsilon$ is computationally expensive, and is not needed to run NPGA itself. The average total computation time over 5{,}000 iterations is below 70 minutes in any LLG setting.\footnote{The computation of estimated utility loss is not straightforward for correlated priors because it requires sampling from the conditional distributions $F_{v_{-i}|v_i}$ which (1) may not be available in general and (2) adds additional computational complexity. While we do not report the estimated utility loss in these setting, the more relevant utility loss (playing against BNE) is so low that we assume convergence.}

\begin{table}
	\caption{Results and runtime of NPGA after 5{,}000 (1{,}000) iterations in the LLLLGG first-price (nearest-VCG) auction over 10 (2) repititions. Values are mean and (standard deviation).}\label{table:auction-results-llllgg}
	\centering
	\small\begin{tabular}{lrrrr}
		
		\multirow{2}{*}{\textbf{payment}} & \multirow{2}{*}{\textbf{bidder}} & \multirow{2}{*}{$\hat{\ell}$} & \multirow{2}{*}{$\hat{\epsilon}$}  & 
		\multicolumn{1}{l}{\multirow{2}{*}{\begin{tabular}[r]{@{}r@{}}\textbf{time} \\sec/iter\end{tabular}}} \\ 
		&& \\
		\toprule
		\multirow{2}{*}{\textbf{\begin{tabular}[l]{@{}l@{}}first-price \end{tabular}}} 
		& \textbf{locals} 	& 0{.}0015 (0{.}0003) & 0{.}0109 (0{.}0025)& 
		 \multirow{2}{*}{\begin{tabular}[r]{@{}r@{}}0{.}97\\(0{.}005) \end{tabular}} \\
		& \textbf{globals} 	& 0{.}0010 (0{.}0002) & 0{.}0077 (0{.}0016) \\
		\midrule		
		\multirow{2}{*}{\textbf{\begin{tabular}[l]{@{}l@{}}near.-VCG \end{tabular}}} 
		& \textbf{locals} 	& 0{.}0013 (0{.}0003) & 0{.}0052 (0{.}0012) & \multirow{2}{*}{\begin{tabular}[r]{@{}r@{}}275{.}22\\(0{.}670) \end{tabular}} \\
		& \textbf{globals} 	& 0{.}0011 (0{.}0006) & 0{.}0098 (0{.}0059) \\
	\end{tabular}
\end{table}

\subsubsection{The LLLLGG setting}\label{subsec:LLLLGG}

In the LLLLGG setting, four local and two global bidders compete for six items, where each bidder is interested in two (partly overlapping) bundles (containing 2 (local) or 4 (global) items each), with actions being represented as $\mathcal A_i = \R^2_+$. No analytical BNE are known in this setting (except for the trivial VCG pricing rule, where bidding truthfully constitutes a BNE). We apply NPGA to LLLLGG with first-price and nearest-vcg rules.

In LLLLGG, computing the clearing prices is nontrivial and computationally expensive and forms our  computational bottleneck, particularly at large batch-sizes: Nearest-VCG prices require solving a linear- and a subsequent quadratic optimization problem for each auction \citep{day2012QuadraticCoreSelectingPayment}. Because NPGA evaluates many thousand auctions in each iteration, we implemented a custom interior-point solver that can solve batches of quadratic optimization problems on the GPU. Nonetheless we make the following hyperparameter adjustments to reduce complexity in the nearest-VCG setting: $P=32$; $K = 2^{14}$; $H=2^{7}$ and $W=2^{8}$ on two experiments of 1{,}000 iterations each.

For both pricing rules, NPGA learns strategy profiles with an estimated ex-ante utility loss $\hat \ell {<} 0.002$ for both local and global bidders. Global (local) bidders achieve stable average utilities of 0.238 (0.18) in first-price and 0.181 (0.201) in nearest-vcg, thus the estimated loss indicates that players can be exploited for less than 1\% of their achieved utility. Figure \ref{fig:llllgg-fp} shows the NPGA learning-curve for both groups of bidders in the first-price setting. We see that both utility and the loss $\hat \ell$ converge fast, the latter reaching a value of $0.0005$ after just 700 iterations.

BNE-computation in the LLLLGG setting has previously been studied by \citet{bosshard2017ComputingBayesNashEquilibria,bosshard2020ComputingBayesNashEquilibria}.Their method, based on point-wise best-response dynamics, has limited comparability to NPGA, as they differ both in goals and evaluation, but one may summarize that NPGA has somewhat lower precision, being an ex-ante method and due to memory-constraints, while being significantly faster: In \cite{bosshard2017ComputingBayesNashEquilibria}, \citeauthor{bosshard2017ComputingBayesNashEquilibria} report finding an estimated ex-interim $0.0037$-BNE for the LLLLGG first-price setting in $54{,}384$ CPU-core hours; NPGA finds an estimated ex-ante $0.0015$-BNE (ex-interim $0.0109$) in 81 minutes on a single GPU ($\approx$5{,}800 CUDA-core-hours).\footnote{In a recent follow-up paper \cite{bosshard2020ComputingBayesNashEquilibria}, \citeauthor{bosshard2017ComputingBayesNashEquilibria} further improve the sample-efficiency of \cite{bosshard2017ComputingBayesNashEquilibria} using sophisticated Monte-Carlo estimation methods. Much of this optimization is equally applicable to NPGA and will be part of future work.}

\section{Conclusion and future work}

This paper explores equilibrium learning in Bayesian games, one of the large unsolved problems in algorithmic game theory. Gradient dynamics are challenging in Bayesian auction games for several reasons: these games are not differentiable, and the continuous type- and action spaces make efficient representation difficult or expensive. We propose Neural Pseudogradient Ascent as a numerical method for equilibrium learning that relies on pseudogradients on parameter spaces of policy networks. We hope that our approach will make possible the study of gradient dynamics in game-theoretical and microeconomic settings where they have previously been considered inapplicable.
In experiments, we validate NPGA on standard single-item and combinatorial auctions, which constitute a pivotal problem in algorithmic game theory with many practical applications. We find that NPGA converges to approximate BNE for central benchmark problems in this field, and we prove a sufficient criterion under which almost sure convergence to equilibria is guaranteed. In summary, the method can provide an effective numerical tool to compute approximate BNE not only for combinatorial auctions but also for other Bayesian games, without setting-specific customization, all while running on consumer hardware and leveraging GPU-parallelization for performance.

\appendix
\section{Proof of Proposition \ref{thm:proposition}\label{app:proof}}
\begin{proof}[Proof] We will approximate the infinite-dimensional Bayesian-game by a finite-dimensional (but continuous action), complete-information ``proxy game''. Under strict monotonicity, the regularity conditions above, and with Regular Convex Policy Networks, we argue that NPGA almost surely finds an approximation of the unique NE in this proxy game. We then give a bound on the ex-ante loss in the original game for this proxy-NE, thus certifying an $\epsilon$-BNE.

	First, existence of the ex-interim gradients (Definition \ref{def:ex-i-monotonicity}) implies that the ex-ante utilities
	$\tilde u_i(\beta_i, \beta_{-i}) = {\expect[v_i]{\overline u_i(v_i, \beta(v_i), \beta_{-i})}}$ are Gâteaux-differentiable in the Hilbert spaces $\Sigma_i$ with Gâteaux-gradients
	$\nabla_{\beta_i} \tilde u_i[\beta](v_i) = {\nabla_{b_i} \overline u_i(v_i, b_i, \beta_{-i})\vert_{b_i=\beta_i(v_i)}}.$ (Follows from 
	2.54, 2.55, 17.10 of \cite{bauschke2011LowerSemicontinuousConvex} and direct calculations.)
	With ex-interim payoff-monotonicity, we then have \begin{equation}
		\begin{aligned}
		 &\phantom{=}\left\langle \nabla_{\beta_i}\tilde u_i[\beta_i, \beta_{-i}] - \nabla_{\alpha_i}\tilde u_i[\alpha_i, \beta_{-i}], \beta_i - \alpha_i \right\rangle_{\Sigma_i} \\
		&= \expect[v_i]{
		  \left\langle 
			\nabla_{b_i} \overline u_i(v_i, b_i, \beta_{-i}) -\nabla_{a_i} \overline u_i(v_i, a_i, \beta_{-i})
			, \;
			 b_i{-}a_i 
		  \right\rangle \middle\vert_{\substack{b_i=\beta_i(v_i)\\a_i=\alpha_i(v_i)}}} \\
		  \overset{eq. \ref{eq:monotonicity}}&{<} \expect[v_i]{0} = 0.
	  \end{aligned}
	  \end{equation}
	  I.e.\ the ex-ante gradients are strictly monotone operators on $\Sigma_i$. It follows that $\tilde u_i$ are strictly concave in $\beta_i$ \cite[Thm 17.10]{bauschke2011LowerSemicontinuousConvex}.
	With a neural network as in Definition \ref{def:npga-policy-net}, the functions $\breve u_i(\theta_i) \equiv \tilde{u}_i(\pi_i(\argdot, \theta_i), \beta_{-i})$ are then also strictly concave in $\theta_i$ for any opponent strategies $\beta_{-i}$. We can then construct a finite-dimensional, complete-information \emph{Parameter Game} $\breve G$, in which all players approximate their strategies $\beta$ in $G$ using policy networks and we interpret the parameters $\theta\in \R^d$ of the networks as \emph{the action} of the new game: $\breve G \equiv (\mathcal I, \Theta, \breve u)$. As this game is finite-dimensional and concave, \citet{mertikopoulos2019LearningGamesContinuous} establish that (1) it has a unique Nash equilibrium $\breve \theta^*$ and (2) the \emph{dual averaging} algorithm converges almost surely to $\breve \theta^*$ given an unbiased and finite-variance oracle of the gradients $\nabla_{\theta_i}\breve u_i(\theta_i; \theta_{-i})$.  Next, we argue that $\breve \theta^*$ induces an approximate BNE in the original Bayesian Game $G$ before analyzing how NPGA implements dual averaging in $\breve G$ with noisy feedback, thus finding a good approximation of $\breve \theta^*$.
	Let $\breve \theta^*$ thus be the Nash equilibrium of $\breve G$. Then for any player $i$, $\breve \theta^*_i$ is a best response (BR) to $\breve \theta^*_{-i}$ and $\pi_i(\argdot, \breve \theta^*_i)$ is an ex-ante BR to $\pi_{-i}(\argdot, \breve \theta^*_{-i})$ in the Bayesian Game with \emph{restricted strategy space} of functions expressible by the network. As we assumed universal approximation properties of $\pi_i$, however, any BR $\beta^*_i$ in the \emph{unrestricted} game $G$ must be close in function space to $\pi_i(\argdot, \breve \theta^*_i)$, and the ex-ante utility loss incurred by not playing $\beta^*_i$ instead of $\pi_i(\argdot; \breve\theta^*_i)$ is bounded: In fact, with the Lipschitz-regularity conditions on the ex-interim gradients and universal approximability of $\pi$, we observe the following for arbitrary $\theta_{-i}$: If $\breve \theta^*\in\Theta_i$ and $\beta^*_i\in\Sigma_i$ are BRs to $\theta_{-i}$ in $\breve G$ and $G$, respectively, then
	\begin{equation}\label{eq:bound-ante-to-param}
		\begin{aligned}
	  \tilde \ell_i(\breve\theta^*; \theta_{-i})  &= \tilde u_i(\beta^*_i, \theta_{-i}) - \tilde u_i(\breve \theta^*_i, \theta_{-i}) \\
			&= \expect[v_i]{\overline{u}_i(v_i, \beta_i^*(v_i), \theta_{-i}) - \overline u_i(v_i; \breve \theta^*_i, \theta_{-i})} \\
			\overset{\textnormal{def. \ref{def:ex-i-monotonicity}}}&{\leq} Z\cdot\expect[v_i]{\lVert \beta^*_i(v_i) - \pi_i(v_i, \breve\theta^*_i)\rVert}
			\overset{\textnormal{def. \ref{def:npga-policy-net}}}{\leq} Z\delta,
		\end{aligned}
	\end{equation}
	where $\tilde u(\theta_i, \theta_{-i}) \equiv \tilde u_i(\pi_i(\argdot; \theta_i), \pi_{-i}(\argdot, \theta_{-i}))$. In the NE, all $\breve\theta^*_i$ are BRs, so we have $\tilde \ell(\breve \theta^*) \leq Z\delta$.

	Finally, we show that NPGA finds a good approximation of $\breve\theta$. As deliberated above, we choose the NN architecture in such a way that $\Theta$ becomes unconstrained, i.e.\ any parameter $\theta_i \in \R^{d_i}$ is feasible, where $d_i$ is the dimension of the network for player $i$. On an \emph{unconstrained} action set $\Theta$, however, dual averaging (with Euclidean regularization) is equivalent to Online Gradient Ascent on $\breve u$ \cite{zinkevich2003OnlineConvexProgramming,mertikopoulos2019LearningGamesContinuous}. Therefore, NPGA implements Dual Averaging on $\tilde u$ using the gradient oracle $\nabla_\theta^{ES}$. 
	
	To use the convergence result of \citet{mertikopoulos2019LearningGamesContinuous} of NPGA to $\breve \theta$, it would remain to show that the Neural Pseudogradients $\nabla^{ES}\tilde u$ are finite-variance and unbiased estimators of the true gradients $\nabla_\theta \breve u$. This is unfortunately violated for strictly positive ES-noise variance $\sigma^2$ used in NPGA (but asymptotically true for $\sigma {\rightarrow }0$). However, for $\sigma{>}0$ we can set $\breve u_i^\sigma \equiv \expect[\varepsilon\sim \mathcal N(0, \sigma^2 I)]{\breve u_i(\theta_i + \varepsilon, \theta_{-i})}$ and introduce yet another finite-dimensional game $\breve G^\sigma {=} (\mathcal I, \Theta, \breve u ^\sigma)$. Now, one can show (see supplement) that (1) $\breve G^\sigma$ is, again, concave, that (2) the ES-gradients are finite-variance, unbiased estimators of $\breve u^\sigma$, and (3) that the loss in $\breve G$ of any $\theta_i$  is bounded by that in $\breve G^\sigma$ via
	\begin{equation}\label{eq:bound-param-to-smoothed}
		\breve \ell_i(\theta_i, \theta_{-i}) \leq \breve \ell_i^\sigma(\theta_i, \theta_{-i}) + 2ZL\sqrt{d_i}\sigma.
	\end{equation}
	Due to (1), $\breve G^\sigma$ again admits a unique NE $\theta^*$ \cite[Thm 2.2]{mertikopoulos2019LearningGamesContinuous}, and with (2) and Definition \ref{def:ex-i-monotonicity}, NPGA converges to $\theta^*$ almost surely for appropriate step sizes \cite[Cor 4.8]{mertikopoulos2019LearningGamesContinuous}. 

	To summarize, we showed that NPGA finds a parameter profile $\theta^*$ that forms a NE of $\breve G^\sigma$ and which retains an-ex ante loss in $G$ of
	\begin{equation}
		\begin{split}
		\tilde \ell_i(\theta^*) &=  \tilde u_i(\beta^*_i, \theta^*_{-i}) \underbrace{- \tilde u_i(\breve \theta_i^*, \theta^*_{-i}) + \tilde u_i(\breve \theta_i^*, \theta^*_{-i})}_{=\,0} - \tilde u_i(\theta^*) \\
		&=	\tilde \ell_i(\breve \theta^*_i; \theta^*_{-i}) + \breve \ell_i(\theta^*)
		\overset{(\ref{eq:bound-ante-to-param}),(\ref{eq:bound-param-to-smoothed})}{\leq} Z\delta+2ZL\sqrt{d_i}\sigma + \underbrace{\breve \ell_i^\sigma(\theta^*)}_{\makebox[0pt]{\scriptsize{\textnormal{loss in }$\breve G^\sigma {=}0$}}}.
		\end{split}
	\end{equation}
	Thus, setting $\epsilon \equiv Z(2L\sqrt{d}\sigma + \delta)$ where $d \equiv \max_i d_i$, NPGA converges almost surely to an ex-ante $\epsilon$-BNE of $G$.
\end{proof}

\balance
\begin{acks}
We’re grateful for funding by the Deutsche Forschungsgemeinschaft (DFG, German Research Foundation) - BI 1057/1-8. We thank Vitor Bosshard, Ben Lubin, Panayotis Mertikopoulos, Sven Seuken, Takashi Ui, and Felipe Maldonado for valuable feedback, and two former students in our group: Kevin Falkenstein, for a separate implementation of initial algorithms, and Anne Christopher, for developing the custom batched QP solver. All errors are ours.
\end{acks}



\bibliographystyle{ACM-Reference-Format} 
\bibliography{bibliography}


\begin{thebibliography}{36}


\ifx \showCODEN    \undefined \def \showCODEN     #1{\unskip}     \fi
\ifx \showDOI      \undefined \def \showDOI       #1{#1}\fi
\ifx \showISBNx    \undefined \def \showISBNx     #1{\unskip}     \fi
\ifx \showISBNxiii \undefined \def \showISBNxiii  #1{\unskip}     \fi
\ifx \showISSN     \undefined \def \showISSN      #1{\unskip}     \fi
\ifx \showLCCN     \undefined \def \showLCCN      #1{\unskip}     \fi
\ifx \shownote     \undefined \def \shownote      #1{#1}          \fi
\ifx \showarticletitle \undefined \def \showarticletitle #1{#1}   \fi
\ifx \showURL      \undefined \def \showURL       {\relax}        \fi
\providecommand\bibfield[2]{#2}
\providecommand\bibinfo[2]{#2}
\providecommand\natexlab[1]{#1}
\providecommand\showeprint[2][]{arXiv:#2}

\bibitem[\protect\citeauthoryear{Amos and Kolter}{Amos and Kolter}{2019}]%
        {amos2019OptNetDifferentiableOptimization}
\bibfield{author}{\bibinfo{person}{Brandon Amos} {and} \bibinfo{person}{J.~Zico
  Kolter}.} \bibinfo{year}{2019}\natexlab{}.
\newblock \showarticletitle{{{OptNet}}: {{Differentiable Optimization}} as a
  {{Layer}} in {{Neural Networks}}}.
\newblock \bibinfo{journal}{\emph{arXiv:1703.00443 [cs, math, stat]}}
  (\bibinfo{date}{Oct.} \bibinfo{year}{2019}).
\newblock
\showeprint[arxiv]{1703.00443}~[cs, math, stat]


\bibitem[\protect\citeauthoryear{Armantier, Florens, and Richard}{Armantier
  et~al\mbox{.}}{2008}]%
        {armantier2008ApproximationNashEquilibria}
\bibfield{author}{\bibinfo{person}{Olivier Armantier},
  \bibinfo{person}{Jean-Pierre Florens}, {and} \bibinfo{person}{Jean-Francois
  Richard}.} \bibinfo{year}{2008}\natexlab{}.
\newblock \showarticletitle{Approximation of {{Nash}} Equilibria in
  {{Bayesian}} Games}.
\newblock \bibinfo{journal}{\emph{Journal of Applied Econometrics}}
  \bibinfo{volume}{23}, \bibinfo{number}{7} (\bibinfo{date}{Nov.}
  \bibinfo{year}{2008}), \bibinfo{pages}{965--981}.
\newblock
\showISSN{08837252, 10991255}
\urldef\tempurl%
\url{https://doi.org/10.1002/jae.1040}
\showDOI{\tempurl}


\bibitem[\protect\citeauthoryear{Ashlagi, Monderer, and Tennenholtz}{Ashlagi
  et~al\mbox{.}}{2011}]%
        {ashlagi2011SimultaneousAdAuctions}
\bibfield{author}{\bibinfo{person}{Itai Ashlagi}, \bibinfo{person}{Dov
  Monderer}, {and} \bibinfo{person}{Moshe Tennenholtz}.}
  \bibinfo{year}{2011}\natexlab{}.
\newblock \showarticletitle{Simultaneous Ad Auctions}.
\newblock \bibinfo{journal}{\emph{Mathematics of Operations Research}}
  \bibinfo{volume}{36}, \bibinfo{number}{1} (\bibinfo{year}{2011}),
  \bibinfo{pages}{1--13}.
\newblock


\bibitem[\protect\citeauthoryear{Ausubel and Baranov}{Ausubel and
  Baranov}{2019}]%
        {ausubel2019CoreselectingAuctionsIncomplete}
\bibfield{author}{\bibinfo{person}{Lawrence~M. Ausubel} {and}
  \bibinfo{person}{Oleg Baranov}.} \bibinfo{year}{2019}\natexlab{}.
\newblock \showarticletitle{Core-Selecting Auctions with Incomplete
  Information}.
\newblock \bibinfo{journal}{\emph{International Journal of Game Theory}}
  (\bibinfo{date}{July} \bibinfo{year}{2019}).
\newblock
\showISSN{1432-1270}
\urldef\tempurl%
\url{https://doi.org/10.1007/s00182-019-00691-3}
\showDOI{\tempurl}


\bibitem[\protect\citeauthoryear{Bach}{Bach}{2017}]%
        {bach2017BreakingCurseDimensionality}
\bibfield{author}{\bibinfo{person}{Francis Bach}.}
  \bibinfo{year}{2017}\natexlab{}.
\newblock \showarticletitle{Breaking the {{Curse}} of {{Dimensionality}} with
  {{Convex Neural Networks}}}.
\newblock \bibinfo{journal}{\emph{Journal of Machine Learning Research}}
  \bibinfo{volume}{18}, \bibinfo{number}{19} (\bibinfo{year}{2017}),
  \bibinfo{pages}{1--53}.
\newblock
\showISSN{1533-7928}


\bibitem[\protect\citeauthoryear{Balduzzi, Racaniere, Martens, Foerster, Tuyls,
  and Graepel}{Balduzzi et~al\mbox{.}}{2018}]%
        {balduzzi2018MechanicsNPlayerDifferentiable}
\bibfield{author}{\bibinfo{person}{David Balduzzi}, \bibinfo{person}{Sebastien
  Racaniere}, \bibinfo{person}{James Martens}, \bibinfo{person}{Jakob
  Foerster}, \bibinfo{person}{Karl Tuyls}, {and} \bibinfo{person}{Thore
  Graepel}.} \bibinfo{year}{2018}\natexlab{}.
\newblock \showarticletitle{The {{Mechanics}} of N-{{Player Differentiable
  Games}}}.
\newblock \bibinfo{journal}{\emph{arXiv:1802.05642 [cs]; ICML}}
  (\bibinfo{date}{Feb.} \bibinfo{year}{2018}).
\newblock
\showeprint[arxiv]{1802.05642}


\bibitem[\protect\citeauthoryear{Bauschke and Combettes}{Bauschke and
  Combettes}{2011}]%
        {bauschke2011LowerSemicontinuousConvex}
\bibfield{author}{\bibinfo{person}{Heinz~H. Bauschke} {and}
  \bibinfo{person}{Patrick~L. Combettes}.} \bibinfo{year}{2011}\natexlab{}.
\newblock \showarticletitle{Lower {{Semicontinuous Convex Functions}}}.
\newblock In \bibinfo{booktitle}{\emph{Convex {{Analysis}} and {{Monotone
  Operator Theory}} in {{Hilbert Spaces}}}},
  \bibfield{editor}{\bibinfo{person}{Heinz~H. Bauschke} {and}
  \bibinfo{person}{Patrick~L. Combettes}} (Eds.).
  \bibinfo{publisher}{{Springer}}, \bibinfo{address}{{New York, NY}},
  \bibinfo{pages}{129--141}.
\newblock


\bibitem[\protect\citeauthoryear{Bichler and Goeree}{Bichler and
  Goeree}{2017}]%
        {bichler2017HandbookSpectrumAuction}
\bibfield{author}{\bibinfo{person}{Martin Bichler} {and}
  \bibinfo{person}{Jacob~K Goeree}.} \bibinfo{year}{2017}\natexlab{}.
\newblock \bibinfo{booktitle}{\emph{Handbook of Spectrum Auction Design}}.
\newblock \bibinfo{publisher}{{Cambridge University Press}}.
\newblock


\bibitem[\protect\citeauthoryear{Bosshard, B{\"u}nz, Lubin, and
  Seuken}{Bosshard et~al\mbox{.}}{2017}]%
        {bosshard2017ComputingBayesNashEquilibria}
\bibfield{author}{\bibinfo{person}{Vitor Bosshard}, \bibinfo{person}{Benedikt
  B{\"u}nz}, \bibinfo{person}{Benjamin Lubin}, {and} \bibinfo{person}{Sven
  Seuken}.} \bibinfo{year}{2017}\natexlab{}.
\newblock \showarticletitle{Computing {{Bayes}}-{{Nash Equilibria}} in
  {{Combinatorial Auctions}} with {{Continuous Value}} and {{Action Spaces}}}.
  In \bibinfo{booktitle}{\emph{Proceedings of the {{Twenty}}-{{Sixth
  International Joint Conference}} on {{Artificial Intelligence}}}}.
  \bibinfo{pages}{119--127}.
\newblock


\bibitem[\protect\citeauthoryear{Bosshard, B{\"u}nz, Lubin, and
  Seuken}{Bosshard et~al\mbox{.}}{2020}]%
        {bosshard2020ComputingBayesNashEquilibria}
\bibfield{author}{\bibinfo{person}{Vitor Bosshard}, \bibinfo{person}{Benedikt
  B{\"u}nz}, \bibinfo{person}{Benjamin Lubin}, {and} \bibinfo{person}{Sven
  Seuken}.} \bibinfo{year}{2020}\natexlab{}.
\newblock \showarticletitle{Computing {{Bayes}}-{{Nash Equilibria}} in
  {{Combinatorial Auctions}} with {{Verification}}}.
\newblock \bibinfo{journal}{\emph{Journal of Artificial Inelligence Research}}
  (\bibinfo{year}{2020}).
\newblock
\showeprint[arxiv]{1812.01955}


\bibitem[\protect\citeauthoryear{Bowling}{Bowling}{2005}]%
        {bowling2005ConvergenceNoRegretMultiagent}
\bibfield{author}{\bibinfo{person}{Michael Bowling}.}
  \bibinfo{year}{2005}\natexlab{}.
\newblock \showarticletitle{Convergence and {{No}}-{{Regret}} in {{Multiagent
  Learning}}}.
\newblock In \bibinfo{booktitle}{\emph{Advances in {{Neural Information
  Processing Systems}} 17}}, \bibfield{editor}{\bibinfo{person}{L.~K. Saul},
  \bibinfo{person}{Y.~Weiss}, {and} \bibinfo{person}{L.~Bottou}} (Eds.).
  \bibinfo{publisher}{{MIT Press}}, \bibinfo{pages}{209--216}.
\newblock


\bibitem[\protect\citeauthoryear{Bowling and Veloso}{Bowling and
  Veloso}{2002}]%
        {bowling2002MultiagentLearningUsing}
\bibfield{author}{\bibinfo{person}{Michael Bowling} {and}
  \bibinfo{person}{Manuela Veloso}.} \bibinfo{year}{2002}\natexlab{}.
\newblock \showarticletitle{Multiagent Learning Using a Variable Learning
  Rate}.
\newblock \bibinfo{journal}{\emph{Artificial Intelligence}}
  \bibinfo{volume}{136}, \bibinfo{number}{2} (\bibinfo{date}{April}
  \bibinfo{year}{2002}), \bibinfo{pages}{215--250}.
\newblock
\showISSN{0004-3702}
\urldef\tempurl%
\url{https://doi.org/10.1016/S0004-3702(02)00121-2}
\showDOI{\tempurl}


\bibitem[\protect\citeauthoryear{Busoniu, Babuska, and De~Schutter}{Busoniu
  et~al\mbox{.}}{2008}]%
        {busoniu2008ComprehensiveSurveyMultiagent}
\bibfield{author}{\bibinfo{person}{Lucien Busoniu}, \bibinfo{person}{Robert
  Babuska}, {and} \bibinfo{person}{B. De~Schutter}.}
  \bibinfo{year}{2008}\natexlab{}.
\newblock \showarticletitle{A {{Comprehensive Survey}} of {{Multiagent
  Reinforcement Learning}}}.
\newblock \bibinfo{journal}{\emph{IEEE Transactions on Systems, Man, and
  Cybernetics, Part C (Applications and Reviews)}} \bibinfo{volume}{38},
  \bibinfo{number}{2} (\bibinfo{date}{March} \bibinfo{year}{2008}),
  \bibinfo{pages}{156--172}.
\newblock
\showISSN{1094-6977}
\urldef\tempurl%
\url{https://doi.org/10.1109/TSMCC.2007.913919}
\showDOI{\tempurl}


\bibitem[\protect\citeauthoryear{Cai and Papadimitriou}{Cai and
  Papadimitriou}{2014}]%
        {cai2014SimultaneousBayesianAuctions}
\bibfield{author}{\bibinfo{person}{Yang Cai} {and} \bibinfo{person}{Christos
  Papadimitriou}.} \bibinfo{year}{2014}\natexlab{}.
\newblock \showarticletitle{Simultaneous Bayesian Auctions and Computational
  Complexity}. In \bibinfo{booktitle}{\emph{Proceedings of the Fifteenth
  {{ACM}} Conference on {{Economics}} and Computation - {{EC}} '14}}.
  \bibinfo{publisher}{{ACM Press}}, \bibinfo{address}{{Palo Alto, California,
  USA}}, \bibinfo{pages}{895--910}.
\newblock
\showISBNx{978-1-4503-2565-3}


\bibitem[\protect\citeauthoryear{Cramton, Shoham, and Steinberg}{Cramton
  et~al\mbox{.}}{2004}]%
        {cramton2004CombinatorialAuctions}
\bibfield{author}{\bibinfo{person}{Peter Cramton}, \bibinfo{person}{Yoav
  Shoham}, {and} \bibinfo{person}{Richard Steinberg}.}
  \bibinfo{year}{2004}\natexlab{}.
\newblock \bibinfo{booktitle}{\emph{Combinatorial {{Auctions}}}}.
\newblock \bibinfo{type}{{T}echnical {R}eport} 04mit.
  \bibinfo{institution}{{University of Maryland, Department of Economics -
  Peter Cramton}}.
\newblock


\bibitem[\protect\citeauthoryear{Daskalakis, Goldberg, and
  Papadimitriou}{Daskalakis et~al\mbox{.}}{2009}]%
        {daskalakis2009ComplexityComputingNash}
\bibfield{author}{\bibinfo{person}{C. Daskalakis}, \bibinfo{person}{P.
  Goldberg}, {and} \bibinfo{person}{C. Papadimitriou}.}
  \bibinfo{year}{2009}\natexlab{}.
\newblock \showarticletitle{The {{Complexity}} of {{Computing}} a {{Nash
  Equilibrium}}}.
\newblock \bibinfo{journal}{\emph{SIAM J. Comput.}} \bibinfo{volume}{39},
  \bibinfo{number}{1} (\bibinfo{date}{Jan.} \bibinfo{year}{2009}),
  \bibinfo{pages}{195--259}.
\newblock
\showISSN{0097-5397}
\urldef\tempurl%
\url{https://doi.org/10.1137/070699652}
\showDOI{\tempurl}


\bibitem[\protect\citeauthoryear{Day and Cramton}{Day and Cramton}{2012}]%
        {day2012QuadraticCoreSelectingPayment}
\bibfield{author}{\bibinfo{person}{Robert~W. Day} {and} \bibinfo{person}{Peter
  Cramton}.} \bibinfo{year}{2012}\natexlab{}.
\newblock \showarticletitle{Quadratic {{Core}}-{{Selecting Payment Rules}} for
  {{Combinatorial Auctions}}}.
\newblock \bibinfo{journal}{\emph{Operations Research}} \bibinfo{volume}{60},
  \bibinfo{number}{3} (\bibinfo{date}{June} \bibinfo{year}{2012}),
  \bibinfo{pages}{588--603}.
\newblock
\showISSN{0030-364X}


\bibitem[\protect\citeauthoryear{D{\"u}tting, Feng, Narasimhan, and
  Parkes}{D{\"u}tting et~al\mbox{.}}{2017}]%
        {dutting2017OptimalAuctionsDeep}
\bibfield{author}{\bibinfo{person}{Paul D{\"u}tting}, \bibinfo{person}{Zhe
  Feng}, \bibinfo{person}{Harikrishna Narasimhan}, {and}
  \bibinfo{person}{David~C. Parkes}.} \bibinfo{year}{2017}\natexlab{}.
\newblock \showarticletitle{Optimal {{Auctions}} through {{Deep Learning}}}.
\newblock \bibinfo{journal}{\emph{arXiv:1706.03459 [cs]}} (\bibinfo{date}{June}
  \bibinfo{year}{2017}).
\newblock
\showeprint[arxiv]{1706.03459}~[cs]
\urldef\tempurl%
\url{http://arxiv.org/abs/1706.03459}
\showURL{%
\tempurl}


\bibitem[\protect\citeauthoryear{Goeree and Lien}{Goeree and Lien}{2016}]%
        {goeree2016ImpossibilityCoreselectingAuctions}
\bibfield{author}{\bibinfo{person}{Jacob~K. Goeree} {and}
  \bibinfo{person}{Yuanchuan Lien}.} \bibinfo{year}{2016}\natexlab{}.
\newblock \showarticletitle{On the Impossibility of Core-Selecting Auctions}.
\newblock \bibinfo{journal}{\emph{Theoretical Economics}} \bibinfo{volume}{11},
  \bibinfo{number}{1} (\bibinfo{year}{2016}), \bibinfo{pages}{41--52}.
\newblock
\showISSN{1555-7561}
\urldef\tempurl%
\url{https://doi.org/10.3982/TE1198}
\showDOI{\tempurl}


\bibitem[\protect\citeauthoryear{Hartline, Syrgkanis, and Tardos}{Hartline
  et~al\mbox{.}}{2015}]%
        {hartline2015NoRegretLearningBayesian}
\bibfield{author}{\bibinfo{person}{Jason Hartline}, \bibinfo{person}{Vasilis
  Syrgkanis}, {and} \bibinfo{person}{Eva Tardos}.}
  \bibinfo{year}{2015}\natexlab{}.
\newblock \showarticletitle{No-{{Regret Learning}} in {{Bayesian Games}}}.
\newblock In \bibinfo{booktitle}{\emph{Advances in {{Neural Information
  Processing Systems}} 28}}, \bibfield{editor}{\bibinfo{person}{C.~Cortes},
  \bibinfo{person}{N.~D. Lawrence}, \bibinfo{person}{D.~D. Lee},
  \bibinfo{person}{M.~Sugiyama}, {and} \bibinfo{person}{R.~Garnett}} (Eds.).
  \bibinfo{publisher}{{Curran Associates, Inc.}}, \bibinfo{pages}{3061--3069}.
\newblock


\bibitem[\protect\citeauthoryear{Kingma and Ba}{Kingma and Ba}{2017}]%
        {kingma2017AdamMethodStochastic}
\bibfield{author}{\bibinfo{person}{Diederik~P. Kingma} {and}
  \bibinfo{person}{Jimmy Ba}.} \bibinfo{year}{2017}\natexlab{}.
\newblock \showarticletitle{Adam: {{A Method}} for {{Stochastic
  Optimization}}}.
\newblock \bibinfo{journal}{\emph{arXiv:1412.6980 [cs]}} (\bibinfo{date}{Jan.}
  \bibinfo{year}{2017}).
\newblock
\showeprint[arxiv]{1412.6980}~[cs]
\urldef\tempurl%
\url{http://arxiv.org/abs/1412.6980}
\showURL{%
\tempurl}


\bibitem[\protect\citeauthoryear{Klambauer, Unterthiner, Mayr, and
  Hochreiter}{Klambauer et~al\mbox{.}}{2017}]%
        {klambauer2017SelfNormalizingNeuralNetworks}
\bibfield{author}{\bibinfo{person}{G{\"u}nter Klambauer},
  \bibinfo{person}{Thomas Unterthiner}, \bibinfo{person}{Andreas Mayr}, {and}
  \bibinfo{person}{Sepp Hochreiter}.} \bibinfo{year}{2017}\natexlab{}.
\newblock \showarticletitle{Self-{{Normalizing Neural Networks}}}.
\newblock \bibinfo{journal}{\emph{arXiv:1706.02515 [cs, stat]}}
  (\bibinfo{date}{June} \bibinfo{year}{2017}).
\newblock
\showeprint[arxiv]{1706.02515}~[cs, stat]


\bibitem[\protect\citeauthoryear{Krishna}{Krishna}{2009}]%
        {krishna2009AuctionTheory}
\bibfield{author}{\bibinfo{person}{Vijay Krishna}.}
  \bibinfo{year}{2009}\natexlab{}.
\newblock \bibinfo{booktitle}{\emph{Auction Theory}}.
\newblock \bibinfo{publisher}{{Academic press}}.
\newblock


\bibitem[\protect\citeauthoryear{Letcher, Balduzzi, Racaniere, Martens,
  Foerster, Tuyls, and Graepel}{Letcher et~al\mbox{.}}{2019}]%
        {letcher2019DifferentiableGameMechanics}
\bibfield{author}{\bibinfo{person}{Alistair Letcher}, \bibinfo{person}{David
  Balduzzi}, \bibinfo{person}{Sebastien Racaniere}, \bibinfo{person}{James
  Martens}, \bibinfo{person}{Jakob Foerster}, \bibinfo{person}{Karl Tuyls},
  {and} \bibinfo{person}{Thore Graepel}.} \bibinfo{year}{2019}\natexlab{}.
\newblock \showarticletitle{Differentiable {{Game Mechanics}}}.
\newblock \bibinfo{journal}{\emph{arXiv:1905.04926 [cs, stat]}}
  (\bibinfo{date}{May} \bibinfo{year}{2019}).
\newblock
\showeprint[arxiv]{1905.04926}~[cs, stat]


\bibitem[\protect\citeauthoryear{Mertikopoulos and Zhou}{Mertikopoulos and
  Zhou}{2019}]%
        {mertikopoulos2019LearningGamesContinuous}
\bibfield{author}{\bibinfo{person}{Panayotis Mertikopoulos} {and}
  \bibinfo{person}{Zhengyuan Zhou}.} \bibinfo{year}{2019}\natexlab{}.
\newblock \showarticletitle{Learning in Games with Continuous Action Sets and
  Unknown Payoff Functions}.
\newblock \bibinfo{journal}{\emph{Mathematical Programming}}
  \bibinfo{volume}{173}, \bibinfo{number}{1} (\bibinfo{date}{Jan.}
  \bibinfo{year}{2019}), \bibinfo{pages}{465--507}.
\newblock
\showISSN{1436-4646}


\bibitem[\protect\citeauthoryear{Milgrom}{Milgrom}{2017}]%
        {milgrom2017DiscoveringPricesAuction}
\bibfield{author}{\bibinfo{person}{Paul Milgrom}.}
  \bibinfo{year}{2017}\natexlab{}.
\newblock \bibinfo{booktitle}{\emph{Discovering Prices: Auction Design in
  Markets with Complex Constraints}}.
\newblock \bibinfo{publisher}{{Columbia University Press}}.
\newblock


\bibitem[\protect\citeauthoryear{Paszke, Gross, Chintala, Chanan, Yang, DeVito,
  Lin, Desmaison, Antiga, and Lerer}{Paszke et~al\mbox{.}}{2017}]%
        {paszke2017AutomaticDifferentiationPyTorch}
\bibfield{author}{\bibinfo{person}{Adam Paszke}, \bibinfo{person}{Sam Gross},
  \bibinfo{person}{Soumith Chintala}, \bibinfo{person}{Gregory Chanan},
  \bibinfo{person}{Edward Yang}, \bibinfo{person}{Zachary DeVito},
  \bibinfo{person}{Zeming Lin}, \bibinfo{person}{Alban Desmaison},
  \bibinfo{person}{Luca Antiga}, {and} \bibinfo{person}{Adam Lerer}.}
  \bibinfo{year}{2017}\natexlab{}.
\newblock \showarticletitle{Automatic Differentiation in {{PyTorch}}}. In
  \bibinfo{booktitle}{\emph{{{NIPS}}-{{W}}}}.
\newblock


\bibitem[\protect\citeauthoryear{Rabinovich, Naroditskiy, Gerding, and
  Jennings}{Rabinovich et~al\mbox{.}}{2013}]%
        {rabinovich2013ComputingPureBayesianNash}
\bibfield{author}{\bibinfo{person}{Zinovi Rabinovich}, \bibinfo{person}{Victor
  Naroditskiy}, \bibinfo{person}{Enrico~H. Gerding}, {and}
  \bibinfo{person}{Nicholas~R. Jennings}.} \bibinfo{year}{2013}\natexlab{}.
\newblock \showarticletitle{Computing Pure {{Bayesian}}-{{Nash}} Equilibria in
  Games with Finite Actions and Continuous Types}.
\newblock \bibinfo{journal}{\emph{Artificial Intelligence}}
  \bibinfo{volume}{195} (\bibinfo{date}{Feb.} \bibinfo{year}{2013}),
  \bibinfo{pages}{106--139}.
\newblock
\showISSN{00043702}
\urldef\tempurl%
\url{https://doi.org/10.1016/j.artint.2012.09.007}
\showDOI{\tempurl}


\bibitem[\protect\citeauthoryear{Roughgarden}{Roughgarden}{2016}]%
        {roughgarden2016TwentyLecturesAlgorithmic}
\bibfield{author}{\bibinfo{person}{Tim Roughgarden}.}
  \bibinfo{year}{2016}\natexlab{}.
\newblock \bibinfo{title}{Twenty {{Lectures}} on {{Algorithmic Game Theory}} |
  {{Algorithmics}}, Complexity, Computer Algebra and Computational Geometry}.
\newblock
\newblock
\showISBNx{9781316624791}


\bibitem[\protect\citeauthoryear{Salimans, Ho, Chen, Sidor, and
  Sutskever}{Salimans et~al\mbox{.}}{2017}]%
        {salimans2017EvolutionStrategiesScalable}
\bibfield{author}{\bibinfo{person}{Tim Salimans}, \bibinfo{person}{Jonathan
  Ho}, \bibinfo{person}{Xi Chen}, \bibinfo{person}{Szymon Sidor}, {and}
  \bibinfo{person}{Ilya Sutskever}.} \bibinfo{year}{2017}\natexlab{}.
\newblock \showarticletitle{Evolution {{Strategies}} as a {{Scalable
  Alternative}} to {{Reinforcement Learning}}}.
\newblock \bibinfo{journal}{\emph{arXiv:1703.03864 [cs, stat]}}
  (\bibinfo{date}{March} \bibinfo{year}{2017}).
\newblock
\showeprint[arxiv]{1703.03864}~[cs, stat]


\bibitem[\protect\citeauthoryear{Schaefer and Anandkumar}{Schaefer and
  Anandkumar}{2019}]%
        {schaefer2019CompetitiveGradientDescent}
\bibfield{author}{\bibinfo{person}{Florian Schaefer} {and}
  \bibinfo{person}{Anima Anandkumar}.} \bibinfo{year}{2019}\natexlab{}.
\newblock \showarticletitle{Competitive {{Gradient Descent}}}.
\newblock \bibinfo{journal}{\emph{Advances in {{Neural Information Processing
  Systems}} 32}} (\bibinfo{year}{2019}), \bibinfo{pages}{7625--7635}.
\newblock


\bibitem[\protect\citeauthoryear{Schuurmans and Zinkevich}{Schuurmans and
  Zinkevich}{2016}]%
        {schuurmans2016DeepLearningGames}
\bibfield{author}{\bibinfo{person}{Dale Schuurmans} {and}
  \bibinfo{person}{Martin~A Zinkevich}.} \bibinfo{year}{2016}\natexlab{}.
\newblock \showarticletitle{Deep {{Learning Games}}}.
\newblock In \bibinfo{booktitle}{\emph{Advances in {{Neural Information
  Processing Systems}} 29}}, \bibfield{editor}{\bibinfo{person}{D.~D. Lee},
  \bibinfo{person}{M.~Sugiyama}, \bibinfo{person}{U.~V. Luxburg},
  \bibinfo{person}{I.~Guyon}, {and} \bibinfo{person}{R.~Garnett}} (Eds.).
  \bibinfo{publisher}{{Curran Associates, Inc.}}, \bibinfo{pages}{1678--1686}.
\newblock
\urldef\tempurl%
\url{http://papers.nips.cc/paper/6315-deep-learning-games.pdf}
\showURL{%
\tempurl}


\bibitem[\protect\citeauthoryear{Silver, Lever, Heess, Degris, Wierstra, and
  Riedmiller}{Silver et~al\mbox{.}}{2014}]%
        {silver2014DeterministicPolicyGradient}
\bibfield{author}{\bibinfo{person}{David Silver}, \bibinfo{person}{Guy Lever},
  \bibinfo{person}{Nicolas Heess}, \bibinfo{person}{Thomas Degris},
  \bibinfo{person}{Daan Wierstra}, {and} \bibinfo{person}{Martin Riedmiller}.}
  \bibinfo{year}{2014}\natexlab{}.
\newblock \showarticletitle{Deterministic {{Policy Gradient Algorithms}}}. In
  \bibinfo{booktitle}{\emph{International {{Conference}} on {{Machine
  Learning}}}}. \bibinfo{pages}{387--395}.
\newblock


\bibitem[\protect\citeauthoryear{Ui}{Ui}{2016}]%
        {ui2016BayesianNashEquilibrium}
\bibfield{author}{\bibinfo{person}{Takashi Ui}.}
  \bibinfo{year}{2016}\natexlab{}.
\newblock \showarticletitle{Bayesian {{Nash}} Equilibrium and Variational
  Inequalities}.
\newblock \bibinfo{journal}{\emph{Journal of Mathematical Economics}}
  \bibinfo{volume}{63} (\bibinfo{date}{March} \bibinfo{year}{2016}),
  \bibinfo{pages}{139--146}.
\newblock
\showISSN{0304-4068}
\urldef\tempurl%
\url{https://doi.org/10.1016/j.jmateco.2016.02.004}
\showDOI{\tempurl}


\bibitem[\protect\citeauthoryear{Vickrey}{Vickrey}{1961}]%
        {vickrey1961CounterspeculationAuctionsCompetitive}
\bibfield{author}{\bibinfo{person}{William Vickrey}.}
  \bibinfo{year}{1961}\natexlab{}.
\newblock \showarticletitle{Counterspeculation, Auctions, and Competitive
  Sealed Tenders}.
\newblock \bibinfo{journal}{\emph{The Journal of finance}}
  \bibinfo{volume}{16}, \bibinfo{number}{1} (\bibinfo{year}{1961}),
  \bibinfo{pages}{8--37}.
\newblock


\bibitem[\protect\citeauthoryear{Zinkevich}{Zinkevich}{2003}]%
        {zinkevich2003OnlineConvexProgramming}
\bibfield{author}{\bibinfo{person}{Martin Zinkevich}.}
  \bibinfo{year}{2003}\natexlab{}.
\newblock \showarticletitle{Online Convex Programming and Generalized
  Infinitesimal Gradient Ascent}. In \bibinfo{booktitle}{\emph{Proceedings of
  the {{Twentieth International Conference}} on {{International Conference}} on
  {{Machine Learning}}}}. \bibinfo{publisher}{{AAAI Press}},
  \bibinfo{address}{{Washington, DC, USA}}, \bibinfo{pages}{928--935}.
\newblock


\end{thebibliography}

\end{document}


\section*{Supplementary Material}
\renewcommand{\theequation}{S.\arabic{equation}}
\renewcommand{\thefigure}{S.\arabic{figure}}

\subsection*{Auxiliary Results in Proof of Proposition 1}

Let all notation be the same as in the main proof of Proposition 1 from Appendix A of the paper.

\begin{lemma}
  Let $\breve G^\sigma = (\mathcal{I}, \Theta, \breve u^\sigma)$ with $\breve u_i^\sigma \equiv \expect[\varepsilon\sim \mathcal N(0, \sigma^2 I)]{\breve u_i(\theta_i + \varepsilon, \theta_{-i})}$. Then
  \begin{enumerate}
    \item $\breve G^\sigma$ is a concave game, i.e. all $\breve u^\sigma_i$ are concave in $\theta_i$.
    \item $\nabla^{ES}\tilde u_i$ is an unbiased and finite-variance estimator of $\nabla_{\theta_i}\breve u^\sigma_i$
    \item For arbitrary $\theta_i,\theta_{-i}\in \Theta$, we have $\breve \ell_i(\theta_i, \theta_{-i}) \leq \breve \ell_i^\sigma(\theta_i, \theta_{-i}) + 2ZL\sqrt{d_i}\sigma.$
  \end{enumerate}
\end{lemma}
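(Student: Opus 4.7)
The proof naturally splits into three parts corresponding to the three claims, each of which reduces to a standard property of Gaussian smoothing.

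For part 1, I would argue that Gaussian smoothing preserves concavity. Since the main proof of Proposition 1 establishes that $\breve u_i(\,\cdot\,, \theta_{-i})$ is concave, observe that $\breve u_i^\sigma(\theta_i,\theta_{-i})$ is, by definition, an expectation over $\varepsilon$ of the translates $\theta_i\mapsto \breve u_i(\theta_i+\varepsilon,\theta_{-i})$. Each translate is concave in $\theta_i$, and an expectation (with non-negative weights integrating to $1$) of concave functions is concave; concavity in $\theta_i$ of $\breve u_i^\sigma$ follows. This makes $\breve G^\sigma$ a concave game.

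For part 2, the plan is to differentiate under the integral sign using the score-function / log-likelihood trick. Writing the expectation with respect to the Gaussian density $p_\sigma$ of $\varepsilon$ and exchanging differentiation and integration (justified by boundedness of $\breve u_i$ and rapid decay of the Gaussian and its derivative), one obtains
\[
\nabla_{\theta_i}\breve u_i^\sigma(\theta_i,\theta_{-i}) = \expect[\varepsilon\sim\mathcal N(0,\sigma^2 I)]{\tfrac{\varepsilon}{\sigma^2}\,\breve u_i(\theta_i+\varepsilon,\theta_{-i})},
\]
which is exactly the expectation of $\nabla^{ES}\tilde u_i$ as defined in the main text, yielding unbiasedness. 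Finite variance then follows from the uniform bound $|\breve u_i|\le Z$ together with the fact that $\|\varepsilon\|^2/\sigma^4$ has finite expectation under $\mathcal N(0,\sigma^2 I)$.

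For part 3, the key inequality is the uniform closeness of $\breve u_i$ and $\breve u_i^\sigma$: assuming $\breve u_i(\,\cdot\,,\theta_{-i})$ is $ZL$-Lipschitz (as guaranteed by the hypotheses carried into the main proof), Jensen's inequality gives
\[
|\breve u_i^\sigma(\theta_i,\theta_{-i}) - \breve u_i(\theta_i,\theta_{-i})| \le ZL\,\expect[\varepsilon]{\|\varepsilon\|} \le ZL\sqrt{d_i}\,\sigma
\]
for all $\theta_i$. I would then decompose
\[
\breve\ell_i(\theta_i,\theta_{-i}) - \breve\ell_i^\sigma(\theta_i,\theta_{-i}) = \Bigl(\sup_{\tp_i} \breve u_i(\tp_i,\theta_{-i}) - \sup_{\tp_i} \breve u_i^\sigma(\tp_i,\theta_{-i})\Bigr) + \bigl(\breve u_i^\sigma(\theta_i,\theta_{-i}) - \breve u_i(\theta_i,\theta_{-i})\bigr),
\]
and bound each bracketed term by $ZL\sqrt{d_i}\sigma$, using the elementary fact that $|\sup f - \sup g|\le \sup |f-g|$ for the first term. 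Summing the two bounds yields the claim.

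The routine pieces are parts 1 and 2; the main obstacle is lining up the constants $Z$ and $L$ in part 3 with the regularity conditions imposed earlier in the main proof (in particular, confirming that the Lipschitz constant of $\breve u_i$ in its first argument is indeed $ZL$ and that the sup over $\tp_i$ is attained on a compact set so that the sup-difference inequality applies cleanly).
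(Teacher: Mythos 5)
Parts 1 and 3 of your proposal match the paper's argument. Concavity is preserved exactly as you say (an expectation of concave translates is concave), and the loss bound follows from the uniform estimate $|\breve u_i^\sigma(\theta_i,\theta_{-i}) - \breve u_i(\theta_i,\theta_{-i})|\le ZL\sqrt{d_i}\,\sigma$, which the paper derives precisely by the chaining you anticipate: the bound $Z$ on $\partial\bar u_i/\partial b_i$ composed with the $L$-Lipschitz continuity of $\pi_i$ in $\theta_i$ gives a $ZL$-Lipschitz bound in $\theta_i$, and $\expect[\varepsilon\sim\mathcal N(0,\sigma^2 I)]{\lVert\varepsilon\rVert}\le\sqrt{d_i}\sigma$ does the rest. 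Your use of $|\sup f - \sup g|\le\sup|f-g|$ replaces the paper's three-term telescoping around a best response $\theta_i^*$, but the two are interchangeable and give the same $2ZL\sqrt{d_i}\sigma$.

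The gap is in part 2, and it is twofold. First, $Z$ is \emph{not} a uniform bound on $|\breve u_i|$: in Definition 1 it bounds the differential of the ex-post utility with respect to the bid, so your finite-variance argument invokes a hypothesis that is not available. The paper is explicit that an \emph{additional} assumption is needed here, namely square-integrability of the ex-post utilities (a constant $S$ with $\E_v\left[u_i(v,\beta(v))^2\right]\le S$ for all $\beta$), from which it derives $\E\left[\lVert\nabla^{ES}\breve u_i\rVert^2\right]\le 4Sd_i<\infty$. Second, $\nabla^{ES}$ is a two-level Monte-Carlo estimator: besides the $P$ Gaussian perturbations $\varepsilon_p$, it averages over $H$ sampled valuation profiles $v_h$ to approximate $\breve u_i$ itself, and it subtracts the baseline $\breve u_i(\theta_i,\theta_{-i})$ as a control variate. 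Your score-function identity gives the correct population gradient (the baseline contributes nothing in expectation since $\E[\varepsilon]=0$), but unbiasedness and finite variance must be argued for the full estimator $\frac{1}{PH\sigma^2}\sum_{p,h}X_{p,h}$, where $X_{p,h}$ involves the sampled valuations; this inner sampling is exactly where the square-integrability in $v$ enters, and it is absent from your sketch.
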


\begin{proof}[Proof of No. 1]

    Let $f: \R^d\rightarrow \R$ be a convex function, then $x\mapsto \expect[\varepsilon\sim \mathcal N(0,\sigma^2)]{f(x+\varepsilon)}$ is also convex: For $x,y \in \R^d$ and $\lambda \in [0,1]$ we have
    \begin{equation}\begin{aligned}
     \expect[\varepsilon \sim \mathcal{N}(0,\sigma^2)]{f(\lambda x + (1-\lambda)y+\varepsilon)}
    &= \expect[\varepsilon]{f(\lambda (x+\varepsilon) + (1-\lambda)(y+\varepsilon)} \\
    &\leq \lambda \E_{\varepsilon} [ f(x+ \varepsilon) ] + (1-\lambda)\E_{\varepsilon} [ f(y+\varepsilon)].   
    \end{aligned}\end{equation}
    Thus, as every $-\breve u_i$ is convex in $\theta_i$, so is every $-\breve u^\sigma_i$.
\end{proof}

\begin{proof}[Proof of No. 2]

For fixed $\sigma > 0$, we have
\[\breve{u}_i^{\sigma}(\theta_i,\theta_{-i}) := \E_{\varepsilon \sim \mathcal{N}(0,\sigma^2I)} [\breve{u}_i(\theta_i +  \varepsilon,\theta_{-i})].\]
This is equal to the convolution of $\breve{u}_i$ with a Gaussian kernel in the $i$-th coordinate. As noted by Salimans et al. (2017), its (exact) gradient with respect to $\theta_i$ is thus given by
\[\nabla_{\theta_i} \breve{u}_i^{\sigma}(\theta_i,\theta_{-i}) = \frac{1}{\sigma}\E_{\varepsilon \sim \mathcal{N}(0,I)}[\varepsilon(\breve{u}_i(\theta_i +  \sigma\varepsilon,\theta_{-i})- \breve{u}_i(\theta_i,\theta_{-i}))].\]
By the substitution $\varepsilon' = \sigma \varepsilon$, we see by the transformation formula that
\[\nabla_{\theta_i} \breve{u}_i^{\sigma}(\theta_i,\theta_{-i}) = \frac{1}{\sigma^2} \E_{\varepsilon \sim \mathcal{N}(0,\sigma^2I)}[\varepsilon(\breve{u}_i(\theta_i +  \varepsilon,\theta_{-i})- \breve{u}_i(\theta_i,\theta_{-i}))].\]
If we approximate this term by taking $P$ independent samples $\varepsilon_p \sim \mathcal{N}(0,\sigma^2I)$, we get
\[\nabla_{\theta_i} \breve{u}_i^{\sigma}(\theta_i,\theta_{-i}) \approx \frac{1}{P\sigma^2}\sum_p \varepsilon_p(\breve{u}_i(\theta_i + \sigma \varepsilon_p,\theta_{-i})- \breve{u}_i(\theta_i,\theta_{-i})).\]
In the same way, we can approximate $\breve{u}_i$ by sampling $H$ valuation profiles $v_h$ with respect to the distribution the valuations are drawn from:
\[\breve{u}_i(\theta_i + \sigma \varepsilon_p,\theta_{-i}) \approx \frac{1}{H} \sum_h u_i(v_{h,i},\pi_i(v_{h,i},\theta_i + \sigma\varepsilon_p),\pi_{-i}(v_{h,-i},\theta_{-i})).\]
The combination of these approximations is exactly how $\nabla^{ES}$ is computed in the NPGA Algorithm:
\[\nabla^{ES}\breve u_i(\theta_i,\theta_{-i}) = \frac{1}{PH\sigma^2}\sum_p \varepsilon_p \sum_h u_i(v_{h,i},\pi_i(v_{h,i},\theta_i + \sigma\varepsilon_p),\pi_{-i}(v_{h,-i},\theta_{-i})) - u_i(v_{h,i},\pi_i(v_{h,i},\theta_i),\pi_{-i}(v_{h,-i},\theta_{-i})).\]
 Since we sample independently and with respect to the original distributions, the approximation is in expectation equal to the true gradient. Thus, the approximation is unbiased with respect to the smoothed utilities $\breve{u}_i^{\sigma}$. $\nabla^{ES}$ also has finite mean squared error: Define
 \[ X_{p,h} = \varepsilon_p\left( u_i(v_{h,i},\pi_i(v_{h,i},\theta_i + \varepsilon_p),\pi_{-i}(v_{h,-i},\theta_{-i})) - u_i(v_{h,i},\pi_i(v_{h,i},\theta_i),\pi_{-i}(v_{h,-i},\theta_{-i})) \right). \]
 With the additional assumption (omitted in the main text) that the ex-post utilities are square-integrable (i.e. there's a $S>0$, s.t. for  all $\beta$: $\expect[v]{u(v,\beta(v))<S}$), we have
 \begin{align*}
 &\E_v[u_i(v_{h,i},\pi_i(v_{h,i},\theta_i + \varepsilon_p),\pi_{-i}(v_{h,-i},\theta_{-i}))^2] \leq S \text{ and}\\
 &\E_v[u_i(v_{h,i},\pi_i(v_{h,i},\theta_i),\pi_{-i}(v_{h,-i},\theta_{-i}))^2] \leq S.
 \end{align*}
 This implies $\E[ X_{p,h}^2 ] \leq 4S\E[\lVert \varepsilon \rVert^2] = 4Sd_i\sigma^2$, where we used the inequality $(a-b)^2 \leq 2a^2+2b^2$.
Since $\nabla^{ES} \breve u_i(\theta_i,\theta_{-i}) = \frac{1}{PH\sigma^2}\sum_{p,h} X_{p,h}$, we have that
 \begin{align*}
 &\E\left[ \nabla^{ES} \breve u_i(\theta_i,\theta_{-i})^2 \right] = \frac{1}{P^2H^2\sigma^2}\E\left[ \left(\sum_{p,h} X_{p,h}\right)^2 \right] = \frac{1}{\sigma^2}\E\left[ \left(\sum_{p,h} \frac{X_{p,h}}{PH}\right)^2 \right] \leq \\
 &\leq \frac{1}{PH\sigma^2}\E \left[ \sum_{p,h} X_{p,h}^2 \right] \leq \frac{1}{PH\sigma^2}4PHd_i\sigma^2S = 4Sd_i < \infty.
 \end{align*}
 Consequently, our gradient estimate has finite mean squared error.
\end{proof}

\begin{proof}[Proof of No. 3]
  We start by bounding the difference between the utilities of the game $\breve G$ and the game $\breve G^{\sigma}$. To be precise, we prove the following bound:
\begin{align}
|\breve u_i(\theta_i,\theta_{-i}) - \breve u_i^{\sigma}(\theta_i,\theta_{-i})| \leq ZL\sqrt{d_i}\sigma \label{eqn:sigma_bound}
\end{align}
for arbitrary strategies $\theta$. By definition, $\breve u^{\sigma}_i (\theta_i,\theta_{-i}) = \E_{\varepsilon \sim \mathcal{N}(0,\sigma^2 I)}[\breve u_i(\theta_i + \varepsilon,\theta_{-i})]$. Since $\breve u_i(\theta_i,\theta_{-i}) =  \E_{\varepsilon \sim \mathcal{N}(0,\sigma^2 I)}[\breve u_i(\theta_i,\theta_{-i})]$, we have the inequality
\begin{align}
|\breve u_i(\theta_i,\theta_{-i}) - \breve u^{\sigma}_i (\theta_i,\theta_{-i})| \leq \E_{\varepsilon \sim \mathcal{N}(0,\sigma^2 I)} [|\breve u_i(\theta_i + \varepsilon,\theta_{-i}) - \breve u_i(\theta_i,\theta_{-i})  |]. \label{eqn:exp_bound}
\end{align}
Next, we show that for fixed $\varepsilon$, $|\breve u_i(\theta_i + \varepsilon,\theta_{-i}) - \breve u_i(\theta_i,\theta_{-i})  | \leq ZL\lVert \varepsilon \rVert$. We compute
\begin{align*}
|\breve u_i(\theta_i + \varepsilon,\theta_{-i}) - \breve u_i(\theta_i,\theta_{-i})|  \leq \E_{v_i} \left[ \lvert \bar{u}_i(v_i,\pi_i(v_i,\theta_i + \varepsilon),\theta_{-i}) -\bar{u}_i(v_i,\pi_i(v_i,\theta_i),\theta_{-i}) \rvert  \right]
\end{align*}
Since by assumption, $\overline{u}_i$ is differentiable with respect to $b_i$ and the differential is uniformly bounded by $Z$ (Definition 1), we have for every $\varepsilon$
\[
\lvert \bar{u}_i(v_i,\pi_i(v_i,\theta_i + \varepsilon),\theta_{-i}) -\bar{u}_i(v_i,\pi_i(v_i,\theta_i),\theta_{-i}) \rvert \leq \left\lVert \frac{\partial \bar{u}_i}{\partial b_i} \right\rVert_{\infty} \lVert \pi_i(v_i,\theta_i + \varepsilon)-\pi_i(v_i,\theta_i) \rVert \leq Z\lVert \pi_i(v_i,\theta_i + \varepsilon)-\pi_i(v_i,\theta_i) \rVert.
\]
Consequently, by the Lipschitz-continuity in Definition 2,
\[|\breve u_i(\theta_i + \varepsilon,\theta_{-i}) - \breve u_i(\theta_i,\theta_{-i})|  \leq Z\E_{v_i} \left[  \lVert \pi_i(v_i,\theta_i + \varepsilon)-\pi_i(v_i,\theta_i) \rVert \right] \leq ZL\lVert \varepsilon \rVert \]
which implies by Equation (\ref{eqn:exp_bound})
\[|\breve u_i(\theta_i + \varepsilon,\theta_{-i}) - \breve u_i^{\sigma}(\theta_i,\theta_{-i})| \leq ZL\E_{\varepsilon \sim \mathcal{N}(0,\sigma^2 I)} [\lVert \varepsilon \rVert] \leq ZL\sqrt{d_i}\sigma.\]
This proves equation (\ref{eqn:sigma_bound}).
Now let $\breve \theta_i$ be a best response to $\theta_{-i}$ in the game $\breve G$. Then
\begin{align*}
&\breve \ell_i(\theta_i,\theta_{-i}) = \breve u_i(\theta^*_i,\theta_{-i})-\breve u_i(\theta_i,\theta_{-i}) = \\
&= \left(\breve u_i(\theta^*_i,\theta_{-i})-\breve u_i^{\sigma}(\theta^*_i,\theta_{-i})\right) +\left(\breve u_i^{\sigma}(\theta^*_i,\theta_{-i})-\breve u_i^{\sigma}(\theta_i,\theta_{-i})\right) +\left(\breve u_i^{\sigma}(\theta_i,\theta_{-i})-\breve u_i(\theta_i,\theta_{-i})\right) \leq \\
&\leq ZL\sqrt{d_i}\sigma + \breve \ell_i^{\sigma}(\theta_i,\theta_{-i}) + ZL\sqrt{d_i}\sigma = \breve \ell_i^{\sigma}(\theta_i,\theta_{-i}) + 2ZL\sqrt{d_i}\sigma.
\end{align*}
\end{proof}